\documentclass[11pt]{article}

\usepackage{array}
\usepackage{amsfonts,amsmath,amssymb,amsthm,mathrsfs}
\usepackage{adjustbox}
\usepackage{tikz}
\usepackage{enumitem}
\usepackage{ifthen}
\usepackage{fullpage}

\usepackage[ruled]{algorithm2e} 

\SetAlFnt{\small}
\SetAlCapFnt{\small}
\SetAlCapNameFnt{\small}
\SetAlCapHSkip{0pt}
\IncMargin{-\parindent}

\usepackage{booktabs} 
\usepackage{xspace}
\usepackage[pagebackref=true,hidelinks]{hyperref}
\usepackage{bbm}
\usepackage{enumitem}
\usepackage[numbers]{natbib}
\usepackage{multirow}
\usepackage{graphicx}
\usepackage{subfigure}
\usepackage{wrapfig}
\usepackage{cleveref}

\newtheorem{theorem}{Theorem}[section]
\newtheorem{proposition}{Proposition}[section]
\newtheorem{lemma}[theorem]{Lemma}
\newtheorem{corollary}[theorem]{Corollary}

\usepackage[normalem]{ulem}
\usepackage{xcolor}
\usepackage{bm}

\newcommand\ytnote[1]{}
\newcommand\rpnote[1]{}
\newcommand\bsnote[1]{}

\newcommand{\mech}{\mathcal{M}}
\newcommand{\flow}{\lambda}
\newcommand{\R}{\mathbb{R}}
\newcommand{\E}{\mathbb{E}}
\newcommand{\bids}{\textbf{b}}
\newcommand{\alloc}{\textbf{x}}
\newcommand{\pay}{\textbf{p}}
\newcommand{\bidf}{\beta}
\newcommand{\lbidf}{\beta^-}
\newcommand{\rbidf}{\beta^+}
\newcommand{\bidfn}{\beta^*}
\newcommand{\values}{\textbf{v}}
\newcommand{\bidfs}{\bm{\beta}}
\newcommand{\rev}{\normalfont\textsc{Rev}}
\newcommand{\fp}{\normalfont\textrm{FP}}
\newcommand{\secp}{\normalfont\textrm{SP}}
\newcommand{\avgp}{\bar{p}}

\newcommand{\C}{\mathcal{C}}
\newlength{\doubletextwidth}
\setlength{\doubletextwidth}{1.8\textwidth}

\begin{document}
\title{Why Do Competitive Markets Converge to First-Price Auctions?}
\author{
Renato Paes Leme \\ Google Research \\ {\tt renatoppl@google.com} \and 
Balasubramanian Sivan \\ Google Research \\ {\tt balusivan@google.com} \and 
Yifeng Teng \\ UW-Madison \\ {\tt yifengt@cs.wisc.edu}
}
\date{}

\maketitle
\thispagestyle{empty}

\begin{abstract}
We consider a setting in which bidders participate in multiple auctions run by different sellers, and optimize their bids for the \emph{aggregate} auction. We analyze this setting by formulating a game between sellers, where a seller's strategy is to pick an auction to run. Our analysis aims to shed light on the recent change in the Display Ads market landscape: here, ad exchanges (sellers) were mostly running second-price auctions earlier and over time they switched to variants of the first-price auction, culminating in Google's Ad Exchange moving to a first-price auction in 2019. Our model and results offer an explanation for why the first-price auction occurs as a natural equilibrium in such competitive markets.

\end{abstract}

\setcounter{page}{1}

\section{Introduction}

The research questions investigated in this paper arise in the backdrop of numerous ad exchanges in the display ads market having switched to first-price auctions in the recent few years, culminating in Google Ad Exchange's move to first-price auction in September 2019~\cite{googFPA}.
This is well summarized in this quote from Scott Mulqueen in~\cite{blogFPA}: ``\emph{Moving to a first-price auction puts Google at parity with other exchanges and SSPs in the market, and will contribute to a much fairer transactional process across demand sources.}''

In this paper we study the display ads market as a game between ad exchanges where the strategy of each exchange is the choice of auction mechanism it picks. Our goal in analyzing this game is to shed light on the incentives for an exchange to prefer one auction over another, and in particular, study if there is a strong game-theoretic justification for all exchanges converging to a first-price auction. Why is the first-price auction (FPA) special?

In our model we consider a display ads market with $n$ sellers $N = \{1,2,\dots,n\}$ representing the ad exchanges, and $m$ buyers representing the bidding networks. Every day each of these exchanges runs billions of auctions, with buyer values drawn independently from a distribution $F$ in each auction. Exchange $j$ controls $\lambda_j$ fraction of the auctions, i.e., any single query could originate from exchange $j$ with probability $\lambda_j$. Every exchange $j$ chooses the mechanism $\mech_j$ it runs. The buyers then decide on a bidding strategy mapping values to bids and use the same strategy \emph{across all exchanges}, i.e, they bid in equilibrium for the average auction $\sum_j \lambda_j \mech_j$.

In this competition between exchanges, what will be the outcome? Is there an equilibrium in this game between exchanges? If so, over what class of allowed auctions? Is the equilibrium symmetric? Is it unique? These are the central questions of the study in this paper. The answers to these questions are nuanced. The novelty in this competition is that when an exchange updates its auction, the change in its buyer response crucially depends on what the other exchanges were running and what their market powers were.

We begin by discussing a key modeling assumption we make before discussing results and interpretations.

\paragraph{Key assumption: Bidders respond to average auction:} Why do we assume that a bidder responds to the average auction rather than choosing a different bidding strategy for each exchange? To set the context: buyers derive value by maximizing their reach, and therefore usually buy inventory from numerous ad exchanges. This entails buyers bidding in numerous auctions simultaneously. While large sophisticated buyers devise exchange-specific strategies, smaller buyers often design a uniform strategy for the entire market. Practical reasons causing buyers not to have exchange specific strategies include: (a) bidders often track their Key Performance Indicators (KPIs) like impressions won, clicks, conversions, ad spend etc. across all exchanges, and don't maintain exchange-specific goals, leading to not having exchange specific strategies; (b) advertisers often participate in a preliminary auction within an ad network that represents the advertisers, and the ad network picks the winning advertiser's bid or clearing price in the preliminary auction and passes it transparently as the bid to the main auction run. This again tends to encourage uniform strategies across exchanges as the preliminary auction run by the ad network is not exchange specific, (c) often developing a bidding strategy for non-truthful auctions is costly and the gain in customizing it per exchange might not justify its cost, especially in cases where many exchanges use close enough auction formats. While we do not claim that every bidder is agnostic to the specific auction used in a query, the fact that a significant fraction of the advertisers are, justifies the model.

Another assumption we make is that the buyer values are drawn iid. Is this justified? Two relevant points here are (a) while iid assumption does not capture all scenarios (e.g. retargeting, where some advertiser has very high values for a returning user), it is not too removed from reality --- we note that the buyers are competing to impress an ad to capture the \emph{same user's attention} (b) even with the iid assumption, the main result is quite technically involved; when buyer values are non-iid, even a single exchange first-price auction's bidder equilibrium for general distributions is very hard to reason about and doesn't have a closed form; therefore gathering general insights about the equilibrium among exchanges would be rendered practically infeasible.

\paragraph{Auctions without reserves, revenue equivalence, and why is FPA sought after:} We start by analyzing a setting where each exchange always allocated to the highest bidder (i.e., no reserves), but allowed to charge the winner arbitrary payments smaller than bid (as long as losers pay $0$). This serves to illustrate both (a) why first-price auctions are universally sought after and (b) the central role of revenue equivalence theorem. When exchanges are allowed to run auctions within the class of auctions above, it turns out that revenue equivalence  simplifies the problem: the total of revenue of all exchanges, remains a constant \emph{regardless of what auction any exchange runs}. So the only question that remains is how the constant sized pie gets split between the exchanges. We argue that if not all exchanges are running first-price, at least one of those not running the first-price auction can capture a larger fraction of the pie by switching to a first-price auction. This fact drives the first-price auction being the \emph{unique Nash equilibrium} in this competition between exchanges.

\paragraph{Auctions with reserves, why revenue equivalence is useless, and why is FPA still sought after:} The simplicity afforded by the seemingly innocuous assumption of no reserves completely vanishes in the more realistic setting when one allows the exchanges to set reserve prices. Indeed, even reasoning about the setting where exchanges are only allowed to run first-price auctions, but are able to set arbitrary reserves is non-trivial (see Section~\ref{sec:simpleex} for an example). Several complications arise here, even with just two competing exchanges and two bidders. First, the two exchanges running first-price auctions with reserves $r_1$ and $r_2$ leads to a different outcome (different allocations, different payments, different sum of revenues of all exchanges) than two exchanges running second-price auctions with reserves $r_1$ and $r_2$ (compare this to the case where when exchanges are forced to use the same fixed reserve or have no reserve price at all, the auction choice does not matter). As a result, the total size of the revenue pie is no longer constant. Second, the equilibrium bidding function is not even continuous, with every unique reserve contributing an additional segment to the bidding function. Third, a symmetric pure strategy equilibrium among the exchanges ceases to exist in general. Fourth, the pure-strategy equilibrium is neither revenue optimal nor welfare optimal.

\paragraph{First-price auction's desirability is robust:} Notwithstanding all the complications listed earlier, we show that when each exchange is allowed to choose between a first-price auction with an arbitrary reserve of exchange's choice and a second-price auction with an arbitrary reserve of exchange's choice, every exchange will pick the first-price auction with reserve! Thus, the first-price auction's desirability is not limited to the case where we are able to apply revenue equivalence. The proof of this fact is quite involved and entails analyzing the different segments of the bidding functions to infer consequential properties.

While our analysis justifies a switch to first-price auctions from a game-theoretical angle, there are several other arguments supporting a first-price auction in general, e..g, the extra transparency / credibility offered by first-price~\cite{AL18}, the uniqueness of equilibrium~\cite{CH13} offered by first-price semantics compared to second-price semantics.

\paragraph{Related Work.} Auctions have been compared along various desiderata in the past. In their famous essay on ``The Lovely But Lonely Vickrey Auction''~\cite{AM06} Ausubel and Milgrom discuss how the Vickrey auction is rarely used in practice despite its academic importance, and how the very closely related ascending auction happens to be quite popular (Christie's, Sotheby's, eBay all use some variants of ascending auction). In a related paper~\cite{AM02}, Ausubel and Milgrom compare the VCG auction with the ascending package auctions in terms of the ability of bidders to understand, incentive properties, their equilibrium outcomes etc. However this is the first work comparing auctions from a revenue-performance-in-a-competitive-marketplace standpoint, and establish the superior performance of the first-price auction. 

Closely related to this paper is the stream of work on \emph{competing mechanism designers} by McAfee \cite{mcafee1993mechanism}, Peters and Serverinov \cite{peters1997competition}, Burguet and Sakovics \cite{burguet1999imperfect}, Pavan and Calzonari \cite{pavan2010truthful} and Pai \cite{Pai09}. We refer to the excellent survey by Mallesh Pai on the topic \cite{Pai10}. In McAfee's model \cite{mcafee1993mechanism} there is a (repeated) two-stage game where seller's first propose an auction mechanism and then unit-demand buyers decide in which mechanism they will participate in. The seller's choice of mechanism needs to balance between extracting revenue and attracting buyers to the auction. The crucial difference between this and our model is that buyers in our model want to buy as much inventory as possible and therefore bid on all mechanisms simultaneously. Instead of choosing one mechanism to participate, our buyers respond to the average mechanism.

\paragraph{Organization.} In Section~\ref{sec:prelim} we provide some preliminary details. In Section~\ref{sec:noreserve} we consider the case of auctions without reserves. In Section~\ref{sec:withreserve} we consider the case of auctions with reserves. In particular, in Section~\ref{sec:mainresults} we state our main results. In Section~\ref{sec:simpleex} we consider a simple example with two bidders and two exchanges to illustrate the complications that manifest themselves when we allow auctions with reserves. In Section~\ref{sec:thmproofs} we give the proofs of all our results. 
\section{Notations and Models}
\label{sec:prelim}
\subsection{Sealed-bid auctions and Bayes-Nash equilibrium}
In a single-item sealed-bid auction with $n$ bidders, we have each bidder's private value $v_i$ drawn from a publicly known distribution $F_i$. A mechanism $(\alloc,\pay)$ maps the bids $\bids = (b_1, \hdots, b_n)$ submitted by the bidders to allocation $x_i(\bids)$ which is the probability that bidder $i$ gets allocated the item, and payment $p_i(\bids)$ that bidder $i$ needs to pay.

Each bidder's strategy corresponds to a  bidding function $\bidf_i:\R\to\R$ to map their value to a bid. Denote by $\bidfs(\values)=(\bidf_1(v_1), \bidf_2(v_2), \cdots, \bidf_n(v_n))$ the vector of bids from all bidders with value profile $\values=(v_1,v_2,\cdots,v_n)$. The goal of each bidder $i$ is to maximize his (quasi-linear) utility $u_i(\bids)=x_i(\bids)v_i-p_i(\bids)$. We say that the bidders' bidding strategy profile $(\bidf_1,\cdots,\bidf_n)$ forms a Bayes-Nash equilibrium if under this profile no single bidder has an incentive to switch his bidding strategy, assuming they only know other bidders' value distributions and not true values. That is,
\begin{equation*}
    \E_{\values_{-i}\sim \bm{F}_{-i}}[u_i(\bidf_i(v_i),\bidfs_{-i}(\values_{-i}))]\geq \E_{\values_{-i}\sim \bm{F}_{-i}}[u_i(\gamma_i(v_i),\bidfs_{-i}(\values_{-i}))]
\end{equation*}
for any other bidding function $\gamma_i$.

\subsection{Competitive market as a two-stage game}
We consider an auction market with a set of bidders $N=\{1,2,\cdots,n\}$ and a set of sellers/exchanges $\{1,2,\cdots,m\}$. Each bidder $i$ has a value $v_i\geq 0$ for the item sold by the exchanges in the market. In this paper we focus on symmetric setting, where each bidder's valuation $v_i$ is drawn independently and identically from a publicly known distribution $F$. Throughout the paper, we assume the value distribution $F$ is differentiable on $[0,\infty)$, have no point mass and the continuous density function $f$ is positive on $(0,\infty)$. 

While bidders repeatedly interact with the exchanges billions of times a day, every such interaction is a priori identical: namely, with a probability $\lambda_j$, a query arrives from exchange $j$ (here $\lambda_j$ indicates exchange $j$'s market power) asking the bidder to submit a bid in exchange $j$'s mechanism $\mech_j$, and in each such interaction every bidder's value is drawn independently from $F$ (independently of previous rounds and other bidders). A bidder $i$'s bid in every individual interaction is a function exclusively of his own value $v_i$ and other bidders' distribution $F$, and in particular \emph{not a function of which specific exchange} $j$ sent this query (see the Introduction for motivation). Given the a priori identical interactions, it follows that it is sufficient to model and study a single such interaction, which is what we do in this paper.

We model the whole market as a two-stage game. The game is parametrized by a class of mechanisms $\C$:
\begin{itemize}
\item In the first stage, each exchange $j$ proposes a mechanism $\mech_j$ simultaneously. The mechanism $\mech_j=(\alloc^j,\pay^j) \in \C$ is a sealed-bid auction that sells a single item to $n$ bidders. For any bidding profile $\bids=(b_1,b_2,\cdots,b_n)\in\R^n$ submitted by the bidders, the auction $\mech_j$ charges each bidder $i$ a non-negative price $p^j_i(\bids)$, and allocates the item to bidder $i$ with probability $x^j_i(\bids)$. 
\item In the second stage, each bidder $i$ proposes a bid $b_i=\bidf_i(v)$ according to his private value $v_i$, here $\bidf_i:\R\to\R$ is the bidding function of bidder $i$. The same bid will be submitted regardless of the exchange $j$ that solicits bids this query. Given that each query comes with probability $\flow_j$ from exchange $j$, the bidder will get allocation $x_i(\bids)=\sum_{j}\flow_j x^j_i(\bids)$ and charged $p_i(\bids)=\sum_{j}\flow_j p^j_i(\bids)$. In other words, the bidders respond to the average mechanism $(\alloc,\pay)=(\sum_{j}\flow_j\alloc_j,\sum_{j}\flow_j\pay_j)$ proposed by the exchanges (again, see Introduction for motivation). Later in the paper we may also refer to such auction as $\sum_{j=1}^{m}\flow_j\mech_j$.
\end{itemize}
In this two-round game, the objective of each bidder $i$ is to maximize his expected utility. Since the setting is symmetric, we are concerned with symmetric bidding equilibrium only, where the bidders are using the same bidding function $\bidf=\bidf_1=\bidf_2=\cdots=\bidf_m$ in the Bayes-Nash equilibrium. In the classes $\C$ of mechanisms that we analyze, a symmetric bidding equilibrium always exists. For the class of mechanisms $\C$ considered in Section~\ref{sec:withreserve}, the symmetric bidding equilibrium is essentially unique\footnote{I.e., unique if we ignore the bids below the smallest reserve price (it's okay to ignore what happens in this regime because any such bid leads to no allocation and generates zero payment), and if we ignore the bids at points of discontinuity of the bidding function, where the bidder is indifferent between the two points of discontinuity.}. For the class $\C$ considered in Section~\ref{sec:noreserve}, every symmetric bidding equilibrium implements the same social choice function (i.e., the allocation) and charges identical interim payments, thereby making uniqueness of symmetric bidding equilibrium irrelevant. The objective of each exchange $j$ is to maximize its expected revenue when bidders are bidding against the average mechanism, which is $$\rev_j = \E_v\left[\sum_{i}p^j_i(\bidfs(\values))\right].$$

As usual in mechanism design, when choosing a new auction rule the exchanges need to reason about how it will affect the bidding strategies of the buyers. The twist in this model is that the chosen mechanism only partially affects the average mechanism that the buyers are reacting to. This leads to a situation very much like \emph{tragedy of the commons} games.
\section{Warm Up: High-bid auctions}
\label{sec:noreserve}
As a warm up we start by analyzing the game between exchanges where each exchange is restricted to using a \emph{high-bid auction}, i.e., an auction where the item is always allocated to the highest bidder. This is satisfied by various natural auction formats: second price, first-price, pricing at any convex combination of first and second price, ... This apparently innocuous assumption, however, prevents the use of reserve prices -- which is the main lever used in practice for revenue optimization. We will come back to reserve prices in the next section. 


Analyzing the equilibrium of the market seems challenging, as the bidders' bidding strategy changes dramatically whenever an exchange proposes a different auction. Surprisingly, if each exchange proposes an auction from above class of mechanisms, we can show that each exchange proposing first-price auction is the unique Nash equilibrium.

\begin{theorem}
\label{thm:noreserve}
If every exchange is only allowed to use auctions with the following properties:
\begin{enumerate}
    \item Highest bidder always wins (this implies a zero reserve price);
    \item Winner pays no more than his bid;
    \item Losers always pay 0;
    \item The bidders have a pure-strategy symmetric Bayes-Nash bidding equilibrium.
\end{enumerate}
Then each bidder proposing first-price auction is a unique Nash equilibrium.
\end{theorem}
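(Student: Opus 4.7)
The plan is to reduce the first-stage competition to a constant-sum game via revenue equivalence and then argue that, inside this constant-sum game, each exchange's payoff is maximized by first-price. \emph{Step 1 (revenue equivalence fixes the pie).} The averaged mechanism $\sum_j \flow_j \mech_j$ is again a high-bid auction: the highest bidder still wins, losers pay $0$, and the winner pays $\sum_j \flow_j q^j(\bids)\in[0,b_{(1)}]$, where $q^j(\bids)$ is the winner's ex-post payment in $\mech_j$. Under any symmetric pure-strategy BNE $\bidf$ (which is strictly increasing on $\mathrm{supp}(F)$, as standard for symmetric high-bid auctions with atomless $F$), the item goes to the bidder of value $v_{(1)}$; since $F$ is atomless with $f>0$ on $(0,\infty)$ and the type-$0$ bidder has zero interim utility (losers pay $0$, type-$0$ wins with probability $0$), Myerson's revenue equivalence forces $T := \sum_j \rev_j$ to equal the expected revenue of the symmetric $n$-bidder FPA with i.i.d.\ values from $F$, \emph{regardless} of $(\mech_1,\dots,\mech_m)$. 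So the sellers play a constant-sum game.

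\emph{Step 2 (per-exchange bound).} The constraint $q^j(\bids)\le b_{(1)}$ combined with $\bidf$ strictly increasing (so that the winner's equilibrium bid equals $\bidf(v_{(1)})$) immediately yields
$$\rev_j \;=\; \flow_j\,\E\!\left[q^j(\bidf(v_1),\dots,\bidf(v_n))\right] \;\le\; \flow_j\,\E[\bidf(v_{(1)})],$$
with equality if and only if $\mech_j$ pays the winner exactly his bid on the support of equilibrium bids, i.e.\ if and only if $\mech_j$ is (effectively) first-price. Summing yields $T\le \E[\bidf(v_{(1)})]$, which is strict whenever any exchange is non-FPA. Intuitively, FPA is precisely the mechanism that captures the maximum possible share $\flow_j\E[\bidf(v_{(1)})]$ of the equilibrium bid pool; being non-FPA leaks revenue into the rest of the market.

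\emph{Step 3 (existence and uniqueness).} For existence, under all-FPA each $\rev_j = \flow_j T$; if $j^*$ deviates to some non-FPA $\mech'$ inducing new equilibrium $\bidf''$, the remaining FPA exchanges collect $(1-\flow_{j^*})\E[\bidf''(v_{(1)})] > (1-\flow_{j^*})T$ out of the same pie $T$ (strict by Step~2 applied to the new averaged mechanism, which contains $\mech'$), leaving $j^*$ with strictly less than $\flow_{j^*}T$, a strict loss. For uniqueness, suppose the profile contains a non-FPA exchange; let $S^c$ be the non-FPA set and $\Lambda := \sum_{j\in S^c}\flow_j$, so $\sum_{j\in S^c}\rev_j = T-(1-\Lambda)\E[\bidf(v_{(1)})]$ and the $\Lambda$-weighted mean of $\rev_j/\flow_j$ on $S^c$ equals $T-\frac{1-\Lambda}{\Lambda}(\E[\bidf(v_{(1)})]-T)\le T$. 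Pick $j^*\in S^c$ with $\rev_{j^*}\le \flow_{j^*}T$ and let it switch to FPA; the new equilibrium bid $\bidf'$ gives $\rev'_{j^*} = \flow_{j^*}\E[\bidf'(v_{(1)})]\ge \flow_{j^*}T\ge \rev_{j^*}$, and at least one of these inequalities is strict (either remaining non-FPA exchanges force $\E[\bidf'(v_{(1)})]>T$, or $\Lambda<1$ made the averaging strict), giving a profitable deviation.

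\emph{Main obstacle.} The delicate part is ensuring that the invariant $T$ really is the \emph{same} constant across all the averaged mechanisms one compares (pre-deviation, post-deviation, all-FPA): this requires every relevant equilibrium $\bidf$ to be strictly monotone on $\mathrm{supp}(F)$ and the type-$0$ bidder to have vanishing interim utility. Both facts are standard for symmetric atomless high-bid auctions but need to be verified uniformly across profiles so that revenue equivalence can be invoked cleanly every time. A minor point is that Step~2 only requires $\mech_j$ to agree with FPA on the support of $\bidfs(\values)$, so the uniqueness conclusion is naturally understood modulo modifications off the equilibrium bid support.
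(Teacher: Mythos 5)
Your proof is correct and follows essentially the same route as the paper's: revenue equivalence pins the total revenue to a constant across all profiles, and first-price extracts the maximal share of the winner's equilibrium bid, so any non-FPA exchange has a (strictly) profitable deviation while all-FPA is deviation-proof. Your Steps 2--3, which quantify everything via $\rev_j \le \flow_j\,\E[\bidf(v_{(1)})]$ and a weighted-average argument, are just a more explicit bookkeeping of the paper's ``below-average revenue'' case analysis, and you share with the paper the same glossed-over point that the symmetric BNE must be strictly monotone for the efficient allocation (and hence revenue equivalence) to apply.
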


\begin{proof}
Consider the aggregated average mechanism. Since highest-valued bidder always wins in any auction proposed by any exchange, thus highest-valued bidder always wins in aggregated average mechanism. Notice that Revenue Equivalence Theorem (see \citet{vickrey1961counterspeculation,Riley1981optimal,myerson1981optimal}) implies that auctions whose equilibria implement the same social choice function always extract the same revenue in those equilibria. Given that bidders have a pure-strategy symmetric BNE, and the highest bidder always wins, we have that the highest valued buyer always wins --- i.e., the same social choice function. Thus the expected sum of revenue over all exchanges $\sum_i \lambda_i \rev_i$ will be a constant, no matter what auctions are proposed by the exchanges (i.e., any auction satisfying properties (1) and (4) in the theorem statement, which get used in the Revenue Equivalence Theorem).

Fixing the bids of the bidders, and respecting properties (2) and (3) in the theorem statement, a first-price auction extracts the highest possible revenue. An exchange running first-price auction will extract at least average revenue among all exchanges, and the equality holds only if all exchanges are running first-price auction. If all exchanges are running first-price auction, then no one would deviate to other auction to get below-average revenue.

If not all exchanges are running first-price auction, there are two possible cases: (i) either some but not all exchanges are using first-price auctions. In such case some exchange $j$ must be getting below-average revenue. By deviating to first-price, this exchange can guarantee at least the average revenue since $\rev_j \geq \rev_i$ and hence $\rev_j \geq \sum_i \lambda_i \rev_i$ which is constant. (ii) if no exchange is using first-price, then there is some exchange $j$ that is at the average or below. By switching to first-price, this exchange can guarantee that  $\rev_j > \rev_i$ for all $i \neq j$ and hence $\rev_j > \sum_i \lambda_i \rev_i$.

In either case, if not all exchanges are running first-price, there is at least one exchange that would switch to first-price auction to get strict revenue improvement. Thus in a Nash equilibrium, every exchange must propose the first-price auction.
\end{proof}

\paragraph{Necessity of properties 2 and 3:} The second and the third properties in Theorem~\ref{thm:noreserve} cannot be removed. For the second property, suppose that every exchange proposes first-price auction. If an exchange switches to the following auction: highest bidder wins and pays 100 times his bid. Then that exchange will get more revenue than all other exchanges using first-price auction, thus above-average revenue. This means that his revenue increases after making such change to proposed auction. For the third assumption, again suppose that every exchange uses first-price auction. Then an exchange would prefer to switch to all-pay auction, where each bidder pays his bid. This leads to his revenue being above average among all exchanges, thus increases his revenue. 

We argue that the second and the third properties are also natural in real-world auctions. However, the first property is a huge restriction to the class of possible mechanisms. For example, most ad exchanges use first-price auction or second-price auction with reserve prices. An auction with reserve price violates the first property, since if the bids of all bidders are below the reserve price, no bidder will get served. In the next section, we discuss what happens when exchanges are allowed to use auctions with reserve prices.

As for property 4, most natural auctions have a pure strategy symmetric BNE. For example, first-price auction, second price auction, a highest-bidder-wins and pays a convex combination of first and second highest bid, and even the all-pay auction (that violates property (3)), all satisfy property (4).
\section{Auctions with reserve price}
\label{sec:withreserve}

\subsection{Main results}
\label{sec:mainresults}
When the highest bidder does not always get the item, the convenient and straightforward application of revenue equivalence theorem becomes out of question\footnote{Although the revenue equivalence theorem itself remains true, any straightforward application of it is not possible.}. Even in very simple settings, as we demonstrate in Section~\ref{sec:simpleex}, the details are quite involved. 

The most ubiquitous auctions used in real-world ad exchanges are first-price auction with reserve prices and second-price auction with reserve prices. We show that given these choices, no exchange will propose second-price auction with reserve in an equilibrium.

\begin{theorem}
\label{thm-fpspne}
In a market with $m\geq 2$ exchanges, if each exchange is only allowed to use first-price auction with reserve or second-price auction with reserve, then in a pure-strategy equilibrium, every exchange will use first-price auction with reserve.
\end{theorem}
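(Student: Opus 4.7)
The plan is to prove the theorem by contradiction: assume a pure-strategy equilibrium in which some exchange $j^*$ uses an SPA with reserve $r^*$, and exhibit a strictly profitable deviation for $j^*$ to an FPA with a carefully chosen reserve.

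First I would fix the equilibrium profile $(\mech_1,\ldots,\mech_m)$ and characterize the symmetric bidding response $\bidf$ to the aggregate mechanism $\sum_j \flow_j \mech_j$. Because each $\mech_j$ is either FPA or SPA with a single reserve, the aggregate allocation and payment rules are piecewise in the bid, with breakpoints at the distinct reserves. Correspondingly $\bidf$ is piecewise smooth, with possible jump discontinuities exactly at the reserves, and on each smooth piece it satisfies a first-order condition that mixes FPA terms (winner pays the bid $b$) with SPA terms (winner pays $\max(y, r_j)$, where $y$ is the second-highest bid).

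Next I would set up a specific deviation. Under a \emph{fixed} bidding function, swapping $j^*$'s SPA for an FPA with the same reserve strictly dominates on a per-sample basis whenever $j^*$ sells, since $\bidf(v_{(1)}) \geq \max(\bidf(v_{(2)}), r^*)$. The catch is that $\bidf$ is endogenous: shifting mass toward first-price in the aggregate induces bidders to shade more, so $\bidf$ drops. To neutralize this, I would choose the new FPA reserve $r^{*\prime}$ so that the aggregate interim allocation, viewed as a function of a bidder's value, is preserved --- concretely, so that the threshold value $v^* = \bidf^{-1}(r^*)$ remains unchanged under the new equilibrium bidding function. With this choice the revenue equivalence theorem, applied to the aggregate mechanism, keeps the \emph{total} revenue across exchanges invariant under the deviation, so only the share going to $j^*$ needs to be analyzed.

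Given total-revenue invariance, I would show that $j^*$'s share strictly grows. A per-sample comparison on the relevant event yields old contribution $\flow_{j^*}\max(\bidf_{\text{old}}(v_{(2)}), r^*)$ versus new contribution $\flow_{j^*}\bidf_{\text{new}}(v_{(1)})$, and the claim is that the expectation of the latter strictly exceeds that of the former. The main obstacle, and where I expect the bulk of the effort to lie, is bounding the downward shift of $\bidf$ on each segment between consecutive reserves so that it cannot swallow the per-sample gain $\bidf(v_{(1)}) - \max(\bidf(v_{(2)}), r^*)$. This requires a careful segment-by-segment analysis of the ODE governing $\bidf$ together with monotonicity of the bidding function with respect to the FPA/SPA mix --- precisely the kind of piecewise bidding-function dissection the authors flag as the hard step.
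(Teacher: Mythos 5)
Your high-level instinct --- that a second-price exchange should deviate to first-price and that the difficulty is the endogenous shift of $\bidf$ --- matches the paper, but the load-bearing device you propose does not work. You want to pick the new FPA reserve $r^{*\prime}$ so that the threshold $v^*=\bidf^{-1}(r^*)$ is preserved, and then invoke revenue equivalence on the aggregate mechanism to freeze total revenue. Two problems. First, preserving $j^*$'s own threshold is not the issue: swapping $\secp_{r^*}$ for $\fp_{r^*}$ with the \emph{same} reserve already leaves that threshold unchanged, because the marginal winner bidding exactly $r^*$ pays $r^*$ under either rule (whenever he wins, the second-highest bid lies below the jump in $\bidf$, so the $\max$ binds at the reserve). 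What changes is everything \emph{downstream}: increasing the first-price weight in the aggregate makes bidders shade more on the segments above $r^*$, which moves the discontinuity points $a_\ell$ associated with every larger reserve $r_\ell$. A single scalar $r^{*\prime}$ cannot pin all of these simultaneously, so the aggregate allocation rule is \emph{not} preserved and revenue equivalence does not apply. Indeed the paper demonstrates in Section~\ref{sec:simpleex} that total revenue genuinely differs across such swaps, so total-revenue invariance --- the step your whole plan rests on --- is false in this setting.

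Second, you are missing the device that replaces it. The paper never tries to bound how far $\bidf$ drops. It keeps the reserve fixed at $r_k$, conditions on the winner's value $v$, and sandwiches both revenues against the \emph{same explicit quantity} $\frac{1}{G(v)}\bigl(G(a_k)r_k+\int_{a_k}^{v}tg(t)\,dt\bigr)$: telescoping the integrated first-order conditions across the segments shows the first-price revenue $\lbidf(v)$ is at least this quantity, while $\bidf(t)\le t$ shows the second-price revenue $\E[\max(\bidf(v^{(2)}),r_k)\mid v^{(1)}=v]$ is at most it, and the two equalities cannot hold simultaneously once $m\ge 2$. This sandwich only works when every exchange with a smaller reserve is already first-price (so that $\avgp_k$ is the identity on the bottom segment), which is why the paper argues by induction over exchanges ordered by reserve rather than deviating an arbitrary SPA exchange $j^*$ as you do. Without that common intermediate bound, the final step of your plan --- showing the per-sample gain survives the downward shift of $\bidf$ --- is precisely the part for which you have no tools.
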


Now that we have established every exchange would pick a first-price auction with a reserve, we study properties of exchanges' equilibrium in such a first-price market. The first property is that every exchange using first-price auction \emph{without reserve} cannot be an equilibrium of the market. Note that this immediately justifies the necessity of property 1 in Theorem \ref{thm:noreserve}.
\begin{theorem}\label{thm-allzero}
Every exchange proposing first-price auction with no reserve is not a Nash equilibrium.
\end{theorem}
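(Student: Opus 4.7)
The plan is to derive a contradiction by exhibiting a profitable unilateral deviation from the candidate all-zero-reserve profile. At that profile, every query is an identical no-reserve FPA, so by symmetry the bidders adopt the standard FPA bidding function $\beta_0$ and exchange $j$ earns $\lambda_j \E[\beta_0(v_{(1)})] = \lambda_j \E[v_{(2)}]$ by revenue equivalence. I will show that any exchange $j$ with $\lambda_j \in (0,1)$ strictly improves by switching to FPA with a sufficiently small positive reserve $r$.

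First I would characterize the new symmetric bidder equilibrium $\beta^*$ after the deviation. Letting $G$ denote the CDF of the maximum bid of the other $n-1$ bidders, a bidder of value $v$ bidding $b$ has expected utility $G(b)(v-b)$ for $b \geq r$ and $(1-\lambda_j) G(b)(v-b)$ for $b < r$. The first-order condition coincides in the two regions, so $\beta^*$ has two pieces joined at a threshold $v^* > r$: for $v < v^*$ bidders use $\beta_0(v) < r$; for $v \geq v^*$ bidders use $\beta_H(v) \geq r$, where $\beta_H$ solves the standard FPA ODE $(F^{n-1}\beta_H)' = (n-1)F^{n-2} f\, v$ with boundary $\beta_H(v^*)=r$. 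The threshold $v^*$ is pinned down by the indifference of a value-$v^*$ bidder between bidding just below $r$ (winning only in the $1-\lambda_j$ share of non-deviating queries) and bidding exactly $r$, which reduces to $\lambda_j v^* + (1-\lambda_j)\beta_0(v^*) = r$.

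Integrating the closed form of $\beta_H$ against the density of $v_{(1)}$ yields
\begin{equation*}
\rev_j(r) - \rev_j(0) = \lambda_j\left[n r F(v^*)^{n-1}(1-F(v^*)) - n(n-1)\int_0^{v^*} F(y)^{n-2} f(y)\, y\, (1-F(y))\, dy\right],
\end{equation*}
and the goal becomes showing this expression is strictly positive for some small $r>0$. The main obstacle is that the $r$-derivative of this difference vanishes at $r=0$: the first-order gain from higher winning bids is exactly canceled by the first-order loss from excluded low-value bidders, so the improvement is a higher-order effect. I would therefore Taylor-expand around $v^*=0$ using the relation $r \sim (n-1+\lambda_j)v^*/n$ together with the leading behavior of $F$ near the origin (e.g.\ $F(v^*)\sim f(0)v^*$ when $f(0)>0$). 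The first bracketed term contributes $(n-1+\lambda_j)$ times the same leading $(v^*)^n$-quantity that the second term contributes with coefficient $(n-1)$, so the bracket equals $\lambda_j f(0)^{n-1}(v^*)^n$ to leading order (with the obvious adaptation when $f(0)=0$). The difference $\rev_j(r)-\rev_j(0)$ is therefore strictly positive for any $\lambda_j\in(0,1)$ and sufficiently small $v^*>0$, contradicting the equilibrium hypothesis and proving the theorem.
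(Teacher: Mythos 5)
Your proposal is correct and follows essentially the same route as the paper: deviate from $(\fp_0,\dots,\fp_0)$ to a small positive reserve $r$, characterize the resulting two-piece bidding function (unchanged below a threshold $v^*$, shifted up by the constant $rG(v^*)-\int_0^{v^*}t\,g(t)\,dt$ above it, where $G=F^{n-1}$), and weigh the gain on the event $v_{(1)}\ge v^*$ against the loss on $v_{(1)}<v^*$. Your indifference condition $r=\lambda_j v^*+(1-\lambda_j)\beta_0(v^*)$ and your exact identity for $\rev_j(r)-\rev_j(0)$ both check out. The only soft spot is the final positivity step: the paper assumes only that $f$ is continuous and positive on $(0,\infty)$, so $F$ need not have a polynomial leading order at $0$, and the Taylor expansion in $v^*$ (together with the ``obvious adaptation when $f(0)=0$'') is not automatically available under the stated hypotheses. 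This is easily repaired from your own identity without any expansion: using $rF(v^*)^{n-1}-\int_0^{v^*}t\,dF^{n-1}(t)=\lambda_j\int_0^{v^*}F(t)^{n-1}dt$ and $F(v^*)-F(t)\le (v^*-t)\sup_{[0,v^*]}f$, the bracketed quantity is at least $n\int_0^{v^*}F(t)^{n-1}dt\cdot\bigl(\lambda_j(1-F(v^*))-v^*\sup_{[0,v^*]}f\bigr)$, which is strictly positive for all sufficiently small $v^*>0$; this is exactly the paper's $h'(a)>0$ computation in integrated form. With that substitution your argument is complete.
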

The second property is that the market does not admit symmetric pure-strategy equilibrium.
\begin{theorem}\label{thm-asymmetric}
Every exchange proposing first-price auction with the same reserve $r$ is not a Nash equilibrium.
\end{theorem}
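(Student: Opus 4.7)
The plan is to assume, toward contradiction, that every exchange using FPA with common reserve $r$ is a Nash equilibrium, and then exhibit a strictly profitable deviation by some exchange (say exchange $1$) to reserve $r-\epsilon$ for a small $\epsilon>0$. The case $r=0$ leaves no room for such a deviation, but it is exactly Theorem~\ref{thm-allzero}, so I may take $r>0$. In the baseline, the aggregate mechanism $\sum_j\lambda_j\mech_j$ is itself an FPA with reserve $r$, so bidders play the standard symmetric bidding function $\beta_0(v) = v - \int_r^v F^{n-1}(t)\,dt/F^{n-1}(v)$ on $[r,\infty)$, and exchange $1$ earns $\lambda_1\E[\beta_0(v_{(1)})\mathbbm{1}[v_{(1)}\ge r]]$, where $v_{(1)}=\max_i v_i$.

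After the deviation, the asymmetric reserves force the new symmetric BNE to acquire a discontinuous \emph{threshold structure}. I would prove that there is some $v^{*}>r$ such that bidders with $v\in[r-\epsilon,v^{*})$ play a ``low branch'' $\gamma(v)\in[r-\epsilon,r)$ and therefore win only at exchange $1$; bidders with $v\ge v^{*}$ play a ``high branch'' $\beta(v)\ge r$ and compete at every exchange; bidders with $v<r-\epsilon$ do not participate. Both $\gamma$ and $\beta$ satisfy the standard first-price ODE $F(v)\cdot(\text{branch})'(v)=(n-1)f(v)(v-\text{branch}(v))$, with boundary conditions $\gamma(r-\epsilon)=r-\epsilon$ and $\beta(v^{*})=r$; the threshold is pinned by the indifference $\lambda_1(v^{*}-\gamma(v^{*}))=v^{*}-r$, which by Taylor expansion yields $v^{*}-r=\lambda_1\epsilon/(1-\lambda_1)+O(\epsilon^{2})$. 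The sanity check is an envelope argument: interim utility grows in $v$ at rate $F^{n-1}(v)$ along the high branch and $\lambda_1 F^{n-1}(v)$ along the low branch, so $\lambda_1<1$ (which holds because $m\ge 2$) sorts the two types correctly, and bids in the intermediate gap $(\gamma(v^{*}),r)$ are easily dominated.

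It remains to compare revenues. Decomposing
\[
\tfrac{1}{\lambda_1}\bigl(\rev_1^{\text{new}}-\rev_1^{\text{old}}\bigr)
=\E\!\left[\gamma(v_{(1)})\mathbbm{1}[v_{(1)}\in[r-\epsilon,v^{*})]\right]
-\E\!\left[\beta_0(v_{(1)})\mathbbm{1}[v_{(1)}\in[r,v^{*})]\right]
+\E\!\left[(\beta-\beta_0)(v_{(1)})\mathbbm{1}[v_{(1)}\ge v^{*}]\right],
\]
the third term is $O(\epsilon^{2})$: a direct computation gives $\beta(v)-\beta_0(v)=[(r-v^{*})F^{n-1}(v^{*})+\int_r^{v^{*}}F^{n-1}(t)\,dt]/F^{n-1}(v)$, whose numerator is $O((v^{*}-r)^{2})=O(\epsilon^{2})$. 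The first two terms are supported on $O(\epsilon)$-wide intervals around $r$, where $\gamma(v)\approx r-\epsilon$ and $\beta_0(v)\approx r$ (using $\beta_0(r)=r$ and the computation $\beta_0'(r)=0$); after plugging in, the leading-order contributions combine to
\[
\rev_1^{\text{new}}-\rev_1^{\text{old}}=\lambda_1\, r\, n f(r)\,F^{n-1}(r)\,\epsilon + O(\epsilon^{2}),
\]
which is strictly positive for $r>0$ (where $f(r)>0$ by assumption) and any sufficiently small $\epsilon>0$, contradicting the Nash property.

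The main obstacle is the second step: rigorously verifying that the threshold configuration really is a symmetric BNE of the deviated game. This means verifying optimality at the jump $v=v^{*}$, ruling out bids in the empty gap $(\gamma(v^{*}),r)$, and confirming that bidders just above $r$ (who previously bid $\beta_0(v)\approx r$ with vanishing slope $\beta_0'(r)=0$) truly defect from the high branch into the low branch. It is precisely the flatness $\beta_0'(r)=0$ that makes the high-branch surplus negligible near $v=r$, so that exchange $1$'s slightly lowered reserve attracts a sliver of bidders of width $O(\epsilon)$ into the low branch; once this structural claim is in hand, the remaining revenue accounting is a routine Taylor expansion.
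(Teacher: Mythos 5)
Your proposal is correct and follows essentially the same route as the paper: reduce to exchange $1$ versus the merged remainder, deviate to reserve $r-\epsilon$, characterize the resulting discontinuous bidding function with jump point $v^*$ pinned by the indifference $\lambda_1(v^*-\gamma(v^*))=v^*-r$ (so $v^*-r=O(\epsilon)$), and show the high-branch revenue loss is $O(\epsilon^2)$ while the net gain near the reserve is $r\,nf(r)F^{n-1}(r)\epsilon+O(\epsilon^2)>0$ — exactly the paper's accounting, organized as a slightly different but equivalent decomposition. The structural verification you flag as the main obstacle is precisely what the paper also defers to a separate lemma (Lemma~\ref{lem-existence}), so nothing essential is missing.
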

A direct corollary of the above two theorems is that the competition in the market leads to the decrease of total revenue of all exchanges. We know that for each bidder having regular value distribution $F$, second-price auction with reserve is the revenue-optimal auction \cite{myerson1981optimal}, and every revenue-optimal auction must have the same allocation rule\footnote{Indeed the revenue-optimal auction is unique for strictly regular distributions because Myerson's theorem~\cite{myerson1981optimal} shows that the optimal auction should not have a non-zero probability of allocation to any agent with a negative virtual value or to any agent with lower than highest virtual value.}. By revenue equivalence theorem, first-price auction with the same reserve is also revenue-optimal. However, since every exchange proposing first-price auction with the same reserve cannot be an equilibrium, it means that in a competitive market the total revenue of exchanges decreases. To summarize, we have the following corollary.
\begin{corollary}\label{cor-lessrev}
Suppose the bidders' valuation distribution $F$ is strictly regular (virtual values are strictly increasing). Then in an equilibrium of the market, the total revenue of all exchanges is lower than in the revenue-optimal auction.
\end{corollary}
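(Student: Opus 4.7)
The plan is to assume for contradiction that a pure-strategy equilibrium attains the Myerson-optimal total revenue $\mathrm{OPT}$, and then contradict Theorems \ref{thm-fpspne}, \ref{thm-allzero}, and \ref{thm-asymmetric}. The first step is to use strict regularity: the virtual value $\phi(v)=v-(1-F(v))/f(v)$ is strictly increasing, so by Myerson's theorem the revenue-optimal auction is uniquely determined by its allocation, which must award the item to the highest-value bidder whenever $v\geq r^*:=\phi^{-1}(0)$. By revenue equivalence, $\mathrm{OPT}$ is attained whenever every exchange runs FPA with common reserve $r^*$, since the aggregate $\sum_j\lambda_j\mathcal{M}_j$ is then itself an FPA at reserve $r^*$.

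Next, I apply Theorem \ref{thm-fpspne}: in any pure-strategy NE every exchange uses FPA with some reserves $r_1,\dots,r_m$ and a symmetric equilibrium bidding function $\beta$. If this NE attains $\mathrm{OPT}$, the uniqueness of the revenue-optimal allocation forces the aggregate allocation to coincide with Myerson's, so
\begin{equation*}
\sum_{j:\,r_j\leq \beta(v)}\lambda_j \;=\; \mathbf{1}\{v\geq r^*\}\qquad\text{for almost every }v.
\end{equation*}
Because each $\lambda_j>0$, the right-hand side being $\{0,1\}$-valued forces the reserves to be effectively identical: either (i) $r_1=\cdots=r_m$, or (ii) $\beta$ has a jump at $v=r^*$ that strictly skips the interval $[\min_j r_j,\max_j r_j)$, so that no bidder ever submits a bid in that range. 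In both sub-cases the effective common bid threshold equals $r^*$.

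I conclude by invoking the failure-of-symmetric-NE results. Case (i) is precisely the symmetric all-FPA configuration ruled out by Theorem \ref{thm-allzero} when $r^*=0$ and by Theorem \ref{thm-asymmetric} when $r^*>0$, directly contradicting the NE hypothesis. In case (ii), the aggregate mechanism, per-exchange allocation, and per-exchange revenue are indistinguishable from the symmetric configuration with common reserve $\max_j r_j=r^*$; hence any profitable seller deviation furnished by the proof of Theorem \ref{thm-asymmetric} remains profitable here, again contradicting NE. The main obstacle is making this case-(ii) transfer rigorous: one must verify that the bidders' post-deviation best response coincides in the asymmetric and symmetric scenarios. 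This should follow because bidders only see the aggregate mechanism and the two aggregates agree before the deviation, but care is needed to ensure the specific deviation from Theorem \ref{thm-asymmetric} (e.g., a reserve raise) preserves the alignment of aggregates after the deviation as well.
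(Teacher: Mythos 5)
Your route is the same as the paper's: the paper's (very terse) argument in the paragraph preceding the corollary is exactly your first and third steps --- strict regularity makes the revenue-optimal allocation unique, revenue equivalence shows that every exchange running $\fp_{r^*}$ with the common Myerson reserve $r^*$ attains the optimal revenue, and Theorems \ref{thm-fpspne}, \ref{thm-allzero} and \ref{thm-asymmetric} rule out the configurations that could attain it. Where you go beyond the paper is in asking whether an all-FPA profile with \emph{distinct} reserves could still implement the Myerson allocation (your case (ii)); the paper silently assumes it cannot. You are right to raise this, but your patch --- transferring the profitable deviation from Theorem \ref{thm-asymmetric} to the asymmetric profile --- is both incomplete (as you yourself note, the aggregates need not remain aligned after the deviation) and unnecessary, because case (ii) is vacuous. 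By the equilibrium characterization in Step 1 of the proof of Theorem \ref{thm-fpspne}, if $r_1<r_2$ (taking reserves sorted) then $\rbidf(a_1)=r_1$ with $a_1=r_1$, while equation (\ref{eqn-al}) forces $a_2>r_2$ (if $a_2=r_2$ the indifference condition would give $\lbidf(a_2)=r_2$, contradicting $\lbidf(a_2)<r_2$); hence on the positive-measure interval $(a_1,a_2)$ the highest-value bidder bids in $[r_1,r_2)$ and is served with probability $\bigl(\sum_{j:\,r_j\leq\bidf(v)}\flow_j\bigr)G(v)$, which lies strictly between $0$ and $G(v)$. This cannot coincide with the all-or-nothing Myerson allocation on a set of positive probability, so by uniqueness of the optimal allocation the revenue is strictly below optimal whenever the reserves are not all equal; combined with your case (i) this closes the proof with no deviation transfer needed.
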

In the next section, we will study the above theorems in a specific market setting, and prove the theorems in general case in section \ref{sec:thmproofs}.

\subsection{An example study with two exchanges and two bidders, in a first-price market}
\label{sec:simpleex}
Let's study a simple example with two bidders and two exchanges, and study the equilibrium of the market. Suppose that there are two bidders with values drawn from uniform distribution $F=U[0,1]$. There are two exchanges, each with $\flow_1=\flow_2=0.5$ fraction of traffic. Let $\rev_j(\mech_1,\mech_2)$ denote the revenue obtained by exchange $j$ when the two exchanges propose mechanisms $\mech_1$ and $\mech_2$. Let $\fp_r$ denote first-price auction with reserve $r$. Suppose we only allow each exchange to propose first-price auction, but with an arbitrary reserve of its choosing. Assume that exchange 1 proposes $\fp_{r_1}$, while exchange 2 proposes $\fp_{r_2}$. Since the setting is symmetric, without loss of generality assume that $r_1\leq r_2$. 

\paragraph{Straightforward revenue equivalence is fruitless:} First we note that the sum of revenue of the two exchanges when they run $\fp_{r_1}$ and $\fp_{r_2}$ is \emph{not revenue equivalent} to the setting where both the exchanges run second price auctions with reserves of $r_1$ and $r_2$, i.e., when they run $\secp_{r_1}$ and $\secp_{r_2}$. To see this consider a setting where $r_1 = 0.7 \leq r_2 = 0.9$.

When the exchanges ran $\secp_{r_1}$ and $\secp_{r_2}$ the bidders bid truthfully. Suppose $v_2 = r_2 + \epsilon = 0.9 + \epsilon$ for $\epsilon > 0$. Bidder $2$ will win the item with probability $0.9 + \epsilon$ (because bidder $1$'s value is drawn from $U[0,1]$).

On the other hand, in the first-price auction, bidder $2$ will shade his bids (i.e., not bid truthfully) and therefore will not get allocated with probability more than $0.5$, even when his value is $v_2 = 0.9 + \epsilon$.
Indeed, suppose bidder $2$ was allocated when exchange $2$ was running the auction\footnote{it is enough to consider deterministic bidding strategies for bidder $2$ as his utility would be equal in every single best response he has, and clearly there is always a deterministic best response}. Then clearly $b_2$ should have been at least $r_2$, making $2$'s payment (which is just $b_2$) also at least $r_2$, and his utility is at most $\epsilon$. On the other hand, suppose bidder $2$ just set $b_2 = 0.7$. Since $b_1 \leq v_1$, bidder $2$ will be served with probability at least $0.7 * 0.5$ (here $0.5$ captures $\lambda_1$ and $0.7$ captures $U[0,1]$ distribution) and therefore get a utility of at least $0.5 * 0.7 * (0.9+\epsilon-0.7) = 0.07 + 0.35\epsilon > \epsilon$ for sufficiently small $\epsilon$. Thus, values slightly higher than $r_2$ would bid lower than $r_2$, leading to a different allocation than the second price auction, thus denying the possibility of applying revenue equivalence easily.

\paragraph{Discontinuous bidding functions:} Another peculiarity that manifests is the discontinuity in the equilibrium bidding function, even when the distribution is nice like $U[0,1]$. To see this, let $\beta(\cdot)$ be the equilibrium bidding function (which is clearly weakly monotone). Consider the smallest value $v$ s.t. $\beta(v) = r_2$. Under any best response bidding function, the utility of buyer with value $v-\epsilon$ is at most $\epsilon$ smaller than the utility of buyer with $v$ (buyer with $v-\epsilon$ can simply imitate $v$'s bid to satisfy this). But a bidder with $v > r_2$ bidding at $b \geq r_2$ gets allocated with about twice the probability and gets about twice the utility when compared to the buyer with value infinitesimally smaller than $v$ bidding $b < r_2$. This is a contradiction, thereby ruling out continuity of the bidding function.

Let's first compute the equilibrium bidding strategy of the bidders and the revenue obtained by each exchange.

\begin{proposition}
Let $a=\frac{2r_2+\sqrt{4r_2^2-3r_1^2}}{3}$ and $c=\frac{4r_2^2+3r_1^2+2\sqrt{4r_2^2-3r_1^2}}{9}$. Then 
\begin{itemize}
    \item Under equilibrium bidding, both bidders use bidding function $$\bidf(v)=\frac{v^2+r_1^2}{2v} \text{ for } v\in[r_1,a)$$ $$\bidf(v)=\frac{v^2+c}{2v} \text{ for } v\in[a,1].$$
    \item The revenue is described in the following formulas:
    $$\rev_1(\fp_{r_1},\fp_{r_2})=-\frac{4}{3}r_1^3+r_1^2a+c-ca+\frac{1}{3}$$
    $$\rev_2(\fp_{r_1},\fp_{r_2})=\frac{1}{3}-\frac{1}{3}a^3+c-ca.$$
\end{itemize}
\end{proposition}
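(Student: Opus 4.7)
The plan is to derive the symmetric equilibrium bidding function $\beta$ in closed form on two segments separated by a single jump discontinuity at a threshold $v=a$, and then obtain each exchange's revenue by integrating $\beta$ against the density of $\max(v_1,v_2)$. The key structural observation is that against a symmetric opponent using $\beta$, a bidder with value $v$ and bid $b$ earns
\[u(v,b)=\tfrac12\mathbf{1}[b\ge r_1]\Pr[\beta(v')<b](v-b)+\tfrac12\mathbf{1}[b\ge r_2]\Pr[\beta(v')<b](v-b),\]
so sliding $b$ across $r_2$ from below infinitesimally raises the payment but doubles the winning probability. Consequently no value ever bids in a small interval ending at $r_2$: $\beta$ has a jump at some $a\in[r_1,1]$ with $\beta([r_1,a))\subseteq[r_1,r_2)$ and $\beta([a,1])\subseteq[r_2,\infty)$ (and $\beta(v)<r_1$ for $v<r_1$, which does not matter since such bidders never win).

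On the lower segment $[r_1,a)$ the bidder wins only in exchange~$1$, and beats her opponent with probability $v'$ by monotonicity of $\beta$. The standard first-order condition applied to $\max_{v'}\tfrac12 v'(v-\beta(v'))$ gives $(v\beta(v))'=v$, which combined with the boundary $\beta(r_1)=r_1$ yields $\beta(v)=(v^2+r_1^2)/(2v)$. On the upper segment $[a,1]$ the bidder wins in both exchanges with probability $v'$, the same ODE integrates to $\beta(v)=(v^2+c)/(2v)$ for a constant $c$, and since no bidder would bid strictly above $r_2$ while she could lower to $r_2$ without altering the allocation, $\beta(a^+)=r_2$, forcing $c=a(2r_2-a)$. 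The threshold $a$ is fixed by the indifference of a value-$a$ bidder between the two local best responses: bidding $\beta(a^-)$ yields utility $(a^2-r_1^2)/4$ while bidding $r_2$ yields $a(a-r_2)$, so $3a^2-4ar_2+r_1^2=0$ and the root in $[r_2,1]$ is $a=(2r_2+\sqrt{4r_2^2-3r_1^2})/3$. Substituting back into $c=a(2r_2-a)$ and simplifying produces the stated expression for $c$.

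Because the density of $\max(v_1,v_2)$ on $[0,1]$ is $2v$ and exchange $j$ collects the winning bid whenever it is at least $r_j$, the $2v$ factor cancels the $2v$ in the denominator of $\beta$, and
\[\rev_1=\int_{r_1}^{a}(v^2+r_1^2)\,dv+\int_{a}^{1}(v^2+c)\,dv,\qquad \rev_2=\int_{a}^{1}(v^2+c)\,dv.\]
Elementary integration then yields the closed-form expressions in the statement. The only non-routine step throughout is the structural one: ruling out a continuous equilibrium, showing that the unique discontinuity of $\beta$ occurs exactly at $r_2$, and verifying that the two piecewise first-order solutions are global rather than merely local best responses. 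The first two facts follow from the doubling-at-$r_2$ observation above, while the third reduces via a single-crossing argument to the same indifference equation that pins down $a$.
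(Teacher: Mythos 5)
Your proof is correct and follows essentially the same route as the paper's: the same first-order ODE $(v\beta(v))'=v$ on each segment, the same boundary conditions $\beta(r_1)=r_1$ and $\beta(a^+)=r_2$, the same indifference equation $3a^2-4ar_2+r_1^2=0$ pinning down $a$, and the same integration of the winning bid against the density $2v$ of the maximum value. (One minor observation: both your derivation and the paper's give $c=2ar_2-a^2=\frac{4r_2^2+3r_1^2+2r_2\sqrt{4r_2^2-3r_1^2}}{9}$, so the factor $r_2$ multiplying the radical appears to have been dropped in the proposition's displayed expression for $c$.)
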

\begin{proof}
The bidding function $\bidf$ has the following properties. When $v<r_1$, $\bidf(v)<r_1$, as the winner has to pay at least $r_1$ when he wins; therefore a bid of $0$ is as good for the bidder as any other bid below $r_1$. There exists some value $a$ such that $\rbidf(a)=r_2$, and $\lbidf(a)<r_2$, here $\rbidf$ and $\lbidf$ denote the right-side limit and left-side limit of the bidding function $\bidf$. 

Since $\bidf$ is a symmetric equilibrium, the bidder with value $v$ will bid $b=\bidf(v)$ that maximizes the expected utility, which is $F(\bidf^{-1}(b))(v-b)$, i.e. the probability of winning multiplied by the utility when he wins. Let $f=F'$ be the density function of $F$. Taking the first-order condition we have 
\[0=\frac{f(\bidf^{-1}(b))}{\bidf'(\bidf^{-1}(b))}(v-b)-F(\bidf^{-1}(b)).\]
Apply $b=\bidf(v)$ to above equation we get
\begin{equation}\label{eqn-first-order}
f(v)v=\bidf'(v)F(v)+f(v)\bidf(v)=\frac{d}{dv}F(v)\bidf(v).
\end{equation}
Integrate both sides for $v\in(r_1,a)$ we have
\begin{equation}\label{eqn-first-part}
\int_{r_1}^{v}tf(t)dt=F(v)\lbidf(v)-F(r_1)\rbidf(r_1).
\end{equation}
Apply $F(v)=v$ to above equation we get $\bidf(v)=\frac{v^2+r_1^2}{2v}$ for $v\in[r_1,a)$.
Similarly integrate both sides of equation (\ref{eqn-first-order}) for $v\in[a,1]$ and apply $F(v)=v$ we get $\bidf(v)=\frac{v^2+c}{2v}$ for $v\in[a,1]$, here $c=2ar_2-a^2$ is a constant.

\begin{figure}[htbp]
        \centering
        \includegraphics[width=0.4\doubletextwidth]{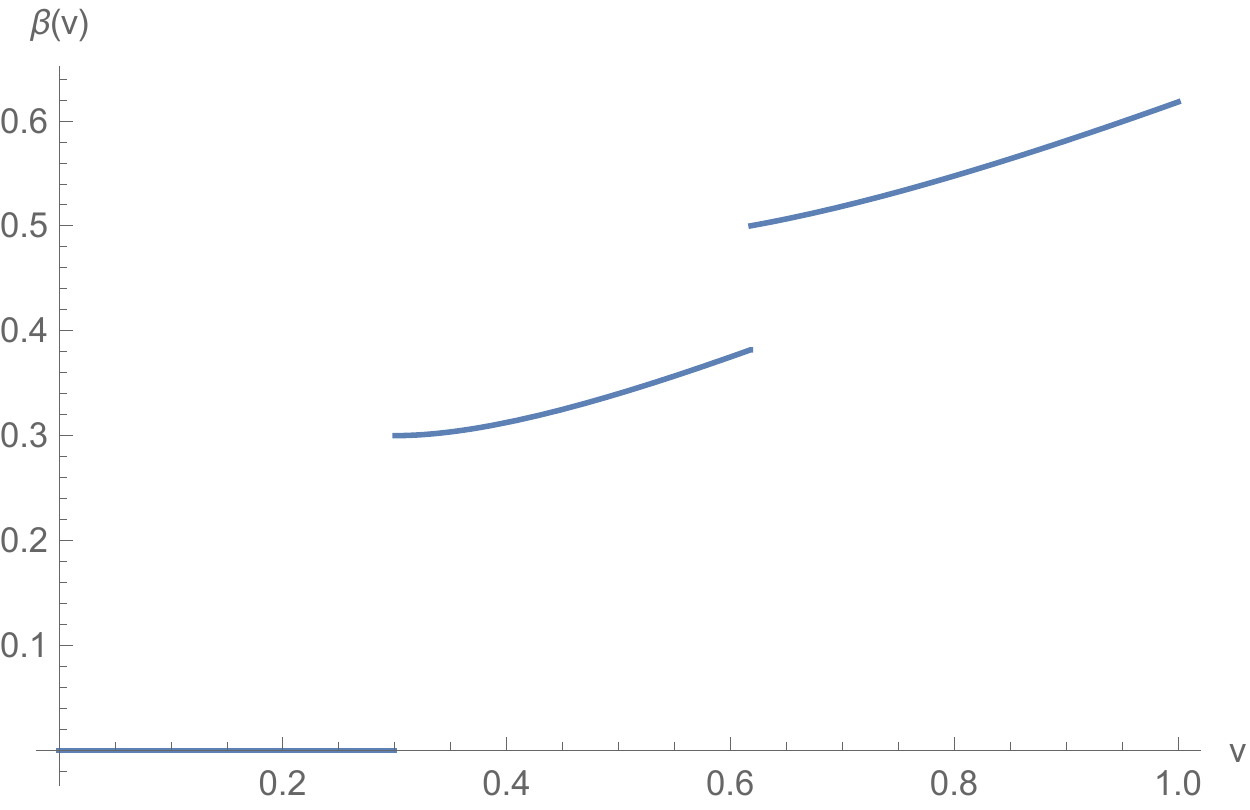}
        \caption{An example of equilibrium bidding function when the exchanges propose auctions $\fp_{0.3}$ and $\fp_{0.5})$. One can notice the discontinuity of such bidding function.}
        \label{fig-example}
\end{figure}

Now we solve the value of $a$. The bidder will be indifferent to bid $\rbidf(a)$ and $\lbidf(a)$ at value $a$, thus the utility he gets would be the same as these two bids, in other words, $a-\lbidf(a)=2(a-\rbidf)=2(a-r_2)$. Apply $\lbidf(a)=\frac{a^2+r_1^2}{2a}$ we get $a=\frac{2r_2+\sqrt{4r_2^2-3r_1^2}}{3}$ and $c=\frac{4r_2^2+3r_1^2+2\sqrt{4r_2^2-3r_1^2}}{9}$. 

Notice that the distribution of the higher value of two bidders has cumulative density function $v^2$ and density function $2v$. Thus the revenue of exchange 1 would be 
\begin{eqnarray*}
\rev_1(\fp_{r_1},\fp_{r_2})&= &\int_{r_1}^{a}2v\bidf(v)dv+\int_{a}^{1}2v\bidf(v)dv\\
&=&\int_{r_1}^{a}(v^2+r_1^2)dv+\int_{a}^{1}(v^2+c)dv\\
&=&-\frac{4}{3}r_1^3+r_1^2a+c-ca+\frac{1}{3}.
\end{eqnarray*}
The revenue of exchange 2 is 
\[\rev_2(\fp_{r_1},\fp_{r_2})=\int_{a}^{1}2v\bidf(v)dv=\frac{1}{3}-\frac{1}{3}a^3+c-ca.\]
\end{proof}

Given the characterization of bidding function and revenue function in the market, we are ready to verify the theorems and corollaries in previous section.
\begin{corollary}\label{cor-reserve-useful} (Special case of Theorem \ref{thm-allzero})
If one exchange proposes first-price auction with zero reserve, the other exchange may propose first-price auction with some reserve price to gain more revenue.
\end{corollary}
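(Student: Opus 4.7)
The corollary asks us to show that, in the two-exchange/two-bidder $U[0,1]$ setting, if exchange~1 sets $r_1 = 0$, then exchange~2 can strictly improve its revenue by moving away from $r_2 = 0$. My plan is to use the closed forms in the preceding proposition as a black box and simply perturb: substitute $r_1 = 0$ into the formulas, expand $\rev_2(\fp_0, \fp_{r_2})$ as an explicit function of $r_2$, and compare to the baseline value at $r_2 = 0$.

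First, specializing the preceding proposition to $r_1 = 0$ (which is still consistent with its hypothesis $r_1 \le r_2$), the crossover point simplifies to $a = \tfrac{4r_2}{3}$ and the constant becomes $c = \tfrac{8 r_2^2}{9}$ (equivalently, $c = 2ar_2 - a^2$ with $a = 4r_2/3$). Plugging these into
\[
\rev_2(\fp_{r_1},\fp_{r_2}) \;=\; \tfrac{1}{3} - \tfrac{1}{3}a^{3} + c - ca
\]
yields
\[
\rev_2(\fp_0,\fp_{r_2}) \;=\; \tfrac{1}{3} + \tfrac{8}{9}r_2^{2} - \tfrac{160}{81}r_2^{3}.
\]

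Second, the baseline revenue when exchange~2 also uses reserve $0$ is $\rev_2(\fp_0,\fp_0) = \tfrac{1}{3}$ (obtained either by continuity of the above expression at $r_2 = 0$, or directly from the standard first-price equilibrium on $U[0,1]$). The difference is
\[
\rev_2(\fp_0,\fp_{r_2}) - \rev_2(\fp_0,\fp_0) \;=\; r_2^{2}\left(\tfrac{8}{9} - \tfrac{160}{81} r_2\right),
\]
which is strictly positive for every $r_2 \in (0, \tfrac{9}{20})$. Choosing any such $r_2$ (say $r_2 = 1/5$) exhibits a profitable deviation for exchange~2 and completes the proof.

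There is no real obstacle here: the entire argument is a substitution into the characterization established by the preceding proposition, followed by noting that the $O(r_2^{2})$ gain dominates the $O(r_2^{3})$ loss for small positive $r_2$. The only thing to be slightly careful about is that the deviating exchange must be labeled as the one with the \emph{larger} reserve to apply the proposition, which is exactly what we have since $0 = r_1 \le r_2$.
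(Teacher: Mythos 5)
Your proposal is correct and takes essentially the same route as the paper: both simply specialize the preceding proposition's revenue formula to the case where the non-deviating exchange has reserve $0$ and exhibit a reserve that beats the baseline $\rev(\fp_0,\fp_0)=\tfrac13$ (the paper checks the single point $r=0.1$, giving $\approx 0.3402$, which agrees with your closed form $\tfrac13+\tfrac89 r^2-\tfrac{160}{81}r^3$ at $r=0.1$). Your version is marginally more informative in that it identifies the full interval $(0,\tfrac{9}{20})$ of profitable deviations, and you are right to flag that the deviator must be cast as the larger-reserve exchange to invoke the proposition.
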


\begin{proof}
When both exchanges use no reserve, the revenue of exchange 1 is $\frac{1}{3}$. However, if he switches to first-price auction with reserve 0.1, his revenue would be $\rev_1(\fp_{0.1},\fp_0)\approx0.3402>\frac{1}{3}$. Thus setting up a reserve may lead to higher revenue for the exchange.
\end{proof}

\begin{corollary}\label{cor-asymmetric-reserve}(Special case of Theorem \ref{thm-asymmetric})
The pure-strategy equilibrium of the market will not be symmetric. That is, both exchanges proposing first-price auction with the same reserve cannot be Nash equilibrium.
\end{corollary}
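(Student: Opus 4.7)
The plan is to show that no value of $r \geq 0$ supports $(\fp_r,\fp_r)$ as a Nash equilibrium by exhibiting a strictly profitable deviation for one of the exchanges in every case. The $r = 0$ case is handled verbatim by Corollary~\ref{cor-reserve-useful}: if both exchanges play $\fp_0$ and earn $\frac{1}{3}$ each, either can switch to $\fp_{0.1}$ and obtain revenue $\approx 0.3402 > \frac{1}{3}$. So the remaining work is the range $r > 0$, where I will use a local first-order argument on the revenue formula from the preceding proposition.

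First, I substitute $r_1 = r_2 = r$ into the formulas. The expression $a = (2r_2 + \sqrt{4r_2^2 - 3r_1^2})/3$ collapses to $a = r$, and $c = 2ar_2 - a^2$ collapses to $c = r^2$; the two pieces of the equilibrium bidding function agree at $v = r$ (continuity holds at the symmetric point) and the benchmark revenue is $\rev_1(\fp_r,\fp_r) = \frac{1}{3} + r^2 - \frac{4}{3}r^3$.

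Next, I hold exchange $2$ at $\fp_r$ and let exchange $1$ deviate \emph{downward} to some $r_1 \in [0, r]$, so that the hypothesis $r_1 \leq r_2$ required by the proposition remains valid throughout. I will compute $\partial \rev_1(\fp_{r_1},\fp_r)/\partial r_1$ at $r_1 = r$ by the chain rule. Implicit differentiation of the definitions yields $\partial a/\partial r_1 = -r_1/\sqrt{4r_2^2 - 3r_1^2}$, which equals $-1$ at the symmetric point, and $\partial c/\partial r_1 = 2(r_2 - a)\,\partial a/\partial r_1$, which vanishes there because $r_2 - a = 0$. Substituting these into
\[
\frac{\partial \rev_1}{\partial r_1} = -4r_1^2 + 2r_1 a + r_1^2 \frac{\partial a}{\partial r_1} + (1-a)\frac{\partial c}{\partial r_1} - c\,\frac{\partial a}{\partial r_1},
\]
I expect the terms to collapse to $-4r^2 + 2r^2 - r^2 + 0 + r^2 = -2r^2$, which is strictly negative for $r > 0$. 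Hence $r_1 = r - \epsilon$ is a strictly profitable deviation for exchange $1$ for all sufficiently small $\epsilon > 0$, ruling out $(\fp_r,\fp_r)$.

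The only delicate step is verifying that the chain-rule derivatives collapse as claimed — specifically, that $\partial c/\partial r_1$ vanishes at the symmetric point because $a = r_2$ there, which is what lets several first-order terms cancel cleanly. Once that cancellation is confirmed, the remaining algebra is elementary and the first-order argument, combined with the $r = 0$ case from Corollary~\ref{cor-reserve-useful}, completes the proof.
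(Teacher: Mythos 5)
Your proposal is correct and follows essentially the same route as the paper: invoke Corollary~\ref{cor-reserve-useful} for $r=0$, then show $\frac{\partial}{\partial r_1}\rev_1(\fp_{r_1},\fp_{r_2})\big|_{r_1=r_2}=-2r_2^2<0$ for $r>0$ so that a small downward deviation $\fp_{r-\epsilon}$ is strictly profitable. Your explicit chain-rule computation (using $a=r$, $c=r^2$, $\partial a/\partial r_1=-1$, and $\partial c/\partial r_1=0$ at the symmetric point) correctly fills in the derivative calculation that the paper merely asserts.
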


\begin{proof}
Corollary \ref{cor-reserve-useful} shows that $(\fp_0,\fp_0)$ cannot be an equilibrium. It remains to show that $(\fp_r,\fp_r)$ cannot be an equilibrium for any $r\in(0,1]$. The idea is for any $r\in(0,1]$, there exists small $\epsilon>0$ such that $\rev_1(\fp_{r-\epsilon},\fp{r})>\rev_1(\fp_{r},\fp_{r})$. This can be done by verifying $\frac{\partial}{\partial r_1}\rev_1(\fp_{r_1},\fp_{r_2})\Big|_{r_1=r_2}=-2r_2^2\leq0$ for any $r_2\in[0,1]$. 
\end{proof}

\begin{corollary}\label{cor-speciallessrev}
(Special case of Corollary \ref{cor-lessrev}) Pure-strategy equilibrium of exchanges in the market will not be revenue-optimal. 
\end{corollary}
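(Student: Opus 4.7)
My plan is to bound the total market revenue in any pure-strategy equilibrium via Myerson's virtual welfare identity, and compare it against the Myerson-optimal revenue $5/12$ for the $U[0,1]$ two-bidder single-item setting (attained by, e.g., $\secp_{1/2}$ or equivalently by $\fp_{1/2}$). By Theorem~\ref{thm-fpspne} both exchanges must run first-price with some reserves, and by Corollary~\ref{cor-asymmetric-reserve} those reserves cannot coincide; after relabeling, every pure-strategy equilibrium therefore has the form $(\fp_{r_1},\fp_{r_2})$ with $r_1<r_2$. Plugging $r_1<r_2$ into the formula $a=\tfrac{2r_2+\sqrt{4r_2^2-3r_1^2}}{3}$ yields the strict chain $a>r_2>r_1$, which will do the work at the end.

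The next step is to read off the aggregate interim allocation from the equilibrium bidding function. Because $\bidf$ is strictly monotone on each of the two continuous pieces $[r_1,a)$ and $[a,1]$, with images inside $[r_1,r_2)$ and $[r_2,\bidf(1)]$ respectively, a direct case analysis on the opponent's value $v_{-i}$ shows that a bidder with value $v$ wins only inside exchange~$1$ when $v\in[r_1,a)$ and wins in both exchanges when $v\in[a,1]$. Consequently $x(v)=\tfrac{v}{2}\,\mathbf{1}[v\in[r_1,a)]+v\,\mathbf{1}[v\in[a,1]]$, and applying Myerson's BNE revenue identity with virtual value $\phi(v)=2v-1$ and summing over both bidders collapses the total revenue to
\[
\rev_{\text{total}} \;=\; \tfrac{1}{3}-H(r_1)-H(a),\qquad H(v)\;:=\;\tfrac{2v^3}{3}-\tfrac{v^2}{2}.
\]

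To finish, $H'(v)=v(2v-1)$ shows $H$ attains its unique minimum on $[0,1]$ at $v=\tfrac{1}{2}$ with $H(\tfrac{1}{2})=-\tfrac{1}{24}$; hence $\rev_{\text{total}}\le\tfrac{1}{3}+\tfrac{2}{24}=\tfrac{5}{12}$, with equality forcing $r_1=a=\tfrac{1}{2}$. The strict inequality $a>r_1$ from the first paragraph rules out equality, giving $\rev_{\text{total}}<\tfrac{5}{12}=\textsc{Opt}$ as desired. I expect the main technical care to go into the case analysis that yields the piecewise interim allocation across the discontinuity of $\bidf$ at $v=a$---once that is in hand, the rest is elementary calculus.
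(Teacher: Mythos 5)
Your proposal is correct, but it proves the corollary by a genuinely different route than the paper. The paper's proof is computational: it numerically identifies the unique equilibrium reserves $r_1=0.2402$, $r_2=0.3157$, evaluates the two revenue formulas from the Proposition, and compares the resulting total $0.3875$ against $5/12$. You instead bypass the equilibrium computation entirely: you extract the interim allocation $x(v)=\tfrac{v}{2}\mathbf{1}[v\in[r_1,a)]+v\,\mathbf{1}[v\in[a,1]]$ from the Proposition's bidding function (this step is sound --- each piece is strictly increasing, the first piece has image in $[r_1,2r_2-a)\subseteq[r_1,r_2)$ since the indifference condition gives $\lbidf(a)=2r_2-a<r_2$ when $a>r_2$, and the second piece starts at $r_2$), apply Myerson's identity with $\phi(v)=2v-1$ to get $\rev_{\text{total}}=\tfrac13-H(r_1)-H(a)$, and observe that this expression attains $5/12$ only at $r_1=a=\tfrac12$, which is incompatible with the strict ordering $a>r_2>r_1$ forced by Corollary~\ref{cor-asymmetric-reserve}. (Your formula checks out numerically against the paper's: with $r_1=0.2402$, $a=0.369$ it gives $0.3875$.) What your approach buys is robustness: it shows \emph{every} pure-strategy equilibrium is revenue-suboptimal without having to trust the numerically computed fixed point or its uniqueness, and it makes transparent \emph{why} revenue is lost (the two effective thresholds $r_1$ and $a$ cannot both sit at the monopoly reserve). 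What it costs is reliance on the Myerson virtual-welfare identity, which the paper deliberately avoids leaning on in the reserve-price regime, though it is perfectly valid here since the boundary condition $U(0)=0$ holds. One minor quibble: the appeal to Theorem~\ref{thm-fpspne} is unnecessary in Section~\ref{sec:simpleex}, where the strategy space is already restricted to first-price auctions with reserves; only Corollary~\ref{cor-asymmetric-reserve} is needed.
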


\begin{proof}
Given the revenue function of the exchanges, we can verify that reserve prices $r_1=0.2402$, $r_2=0.3157$ corresponds to the unique equilibrium of the market. The revenue of exchange 1 is 0.397, while the revenue of exchange 2 is 0.378. The total revenue of the market under equilibrium is $\frac{1}{2}*0.397+\frac{1}{2}*0.378=0.3875$. However, in revenue-optimal auction (i.e. first-price auction with reserve $0.5$), the revenue would be $\frac{5}{12}=0.4167>0.3875$.
\end{proof}

Although the revenue decreases in equilibrium of the market, social welfare (i.e. the expected value obtained by the bidders) increases.

\begin{corollary}
Pure-strategy equilibrium of the market generates more welfare than revenue-optimal auction.
\end{corollary}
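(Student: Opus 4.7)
The plan is to compare the welfare of the competitive equilibrium (identified in Corollary~\ref{cor-speciallessrev}) with that of the revenue-optimal auction in the $U[0,1]$ two-bidder, two-exchange example, exploiting the fact that every reserve arising in the equilibrium lies strictly below the Myerson reserve $\tfrac{1}{2}$. Since the symmetric bidding function $\bidf$ is weakly monotone, the top-value bidder wins whenever any allocation happens in either regime, so welfare is fully determined by how \emph{often} an allocation happens at each exchange.

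The first step is to describe, in terms of $v_{(1)}$ (the top value), when each exchange actually sells. In the equilibrium with $(r_1,r_2)=(0.2402,0.3157)$, exchange~1 sells iff $\bidf(v_{(1)})\ge r_1$, i.e.\ iff $v_{(1)}\ge r_1$, while exchange~2 sells iff $\bidf(v_{(1)})\ge r_2$, i.e.\ iff $v_{(1)}\ge a$, where $a=\tfrac{2r_2+\sqrt{4r_2^2-3r_1^2}}{3}$ is the threshold from the previous proposition. In the revenue-optimal auction, both exchanges use reserve $\tfrac{1}{2}$, so the item is allocated iff $v_{(1)}\ge\tfrac{1}{2}$. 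Using that $v_{(1)}$ has density $2v$ on $[0,1]$, the expected welfares are
\[W^{\mathrm{eq}}=\flow_1\!\int_{r_1}^{1}\!\!2v^2\,dv+\flow_2\!\int_{a}^{1}\!\!2v^2\,dv=\tfrac{1}{3}(1-r_1^3)+\tfrac{1}{3}(1-a^3),\qquad W^{\mathrm{opt}}=\int_{1/2}^{1}\!\!2v^2\,dv=\tfrac{7}{12}.\]

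The final step is a direct comparison. Writing $W^{\mathrm{opt}}$ as $\tfrac{1}{3}(1-\tfrac{1}{8})+\tfrac{1}{3}(1-\tfrac{1}{8})$ for symmetry, we obtain
\[W^{\mathrm{eq}}-W^{\mathrm{opt}}=\tfrac{1}{3}\bigl(\tfrac{1}{8}-r_1^3\bigr)+\tfrac{1}{3}\bigl(\tfrac{1}{8}-a^3\bigr),\]
which is strictly positive as soon as $r_1<\tfrac{1}{2}$ and $a<\tfrac{1}{2}$. The bound $r_1<\tfrac{1}{2}$ is immediate from Corollary~\ref{cor-speciallessrev}. The only routine thing worth checking is the bound on $a$: plugging $r_1=0.2402$ and $r_2=0.3157$ into the closed form for $a$ yields $a\approx 0.369<\tfrac{1}{2}$. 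There is no real obstacle beyond this numeric verification, since both welfares reduce to elementary integrals of $v_{(1)}$'s density and the monotonicity of $v\mapsto \tfrac{2}{3}(1-v^3)$ makes the sign of the difference transparent.
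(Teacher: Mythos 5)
Your proposal is correct and follows essentially the same route as the paper: both arguments reduce to observing that in the equilibrium $(r_1,r_2)=(0.2402,0.3157)$ the item is allocated whenever $v_{(1)}\ge r_1$ (and at both exchanges whenever $v_{(1)}\ge a=0.369$), whereas the revenue-optimal auction allocates only when $v_{(1)}\ge \tfrac12$, so the equilibrium allocation dominates. You merely make explicit the welfare integrals that the paper leaves implicit in its pointwise-dominance argument; the substance is identical.
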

\begin{proof}
Reserve prices $r_1=0.2402$, $r_2=0.3157$ corresponds to the unique equilibrium of the market. We can compute the lowest winner's value when both exchanges sell the item: $a=\frac{2r_2+\sqrt{4r_2^2-3r_1^2}}{3}=0.369$. Since in revenue-optimal auction, the auction only sells the item when winner's value is at least 0.5. Thus the welfare in such first-price market's equilibrium is higher than in revenue-optimal auction. 
\end{proof}

\subsection{Proof of theorems in Section \ref{sec:mainresults}}
\label{sec:thmproofs}

\begin{proof}[Proof of Theorem \ref{thm-fpspne}.]
The proof of the theorem is structured as follows. Firstly we characterize the bidding equilibrium of bidders in a market with first-price and second-price auctions. We show the equilibrium bidding functions have discontinuous segments, and analyze the equation the bidding function need to satisfy in each segment. Secondly we prove that for any exchange with reserve $r_{\ell}$, if all exchanges with smaller reserves use first-price auctions, then such exchange will prefer to propose first-price auction with reserve $r_{\ell}$ rather than second-price auction with the same reserve $r_{\ell}$. We prove this by showing that conditioned on highest-valued bidder has any value $v$, the expected payment received by such exchange when proposing $\fp_{r_{\ell}}$ will always be as large as proposing $\secp_{r_{\ell}}$. This will imply that no exchange in an equilibrium will prefer to propose second-price auction with reserve.

\textit{\textbf{Step 1: Characterize the symmetric bidding equilibrium of bidders.}} 
Given a bidder with value $v$, let $G(v)=F^{n-1}(v)$ be the probability that the second highest value is smaller than $v$, and $g(v)=G'(v)$ be the corresponding density function. Assume that exchanges proposing $\mech^1_{r_1},\mech^2_{r_2},\cdots,\mech^m_{r_m}$ forms an auction equilibrium, here each auction $\mech^i\in\{\fp,\secp\}$ can only be first-price or second price, and without loss of generality $r_1\leq r_2\leq\cdots\leq r_m$. To prove the theorem, it suffices to show that if in an equilibrium the auctions with lower reserves are first-price auctions, i.e. $\mech^1=\mech^2=\cdots=\mech^{k-1}=\fp$ for some $k\in[m]$, then $\mech^{k}=\fp$.

We first investigate what does the symmetric equilibrium bidding function $\bidf$ of the bidders look like. Similar to the case with two exchanges that we have studied before, the equilibrium bidding function would have at most $m+1$ continuous segments (also see Figure \ref{fig-example}). To be more precise, there exists $a_1\leq a_2\leq\cdots\leq a_m$ such that $\rbidf(a_\ell)=r_\ell$; while $\lbidf(a_\ell)<r_\ell$, if $r_{\ell-1}<r_\ell$. Let $a_0=0$ and $a_{m+1}=+\infty$. For $v\in[a_\ell,a_{\ell+1})$, let $\avgp_\ell(b)$ denote the average payment of the winner in aggregated average auction of $\mech^1_{r_1},\mech^2_{r_2},\cdots,\mech^\ell_{r_\ell}$ (which is $\frac{1}{\sum_{j=1}^{\ell}\flow_j}\sum_{j=1}^{\ell}\flow_j\mech^j_{r_j}$), conditioned on winner's bid being $b$.

When highest-valued bidder has value $a_\ell$, he will bid $\lbidf(a_\ell)$ and $\rbidf(a_\ell)$ indifferently. When bidding $\lbidf(a_\ell)$, he wins in the first $\ell-1$ exchanges, and gets profit $\left(\sum_{j=1}^{\ell-1}\flow_j\right)\Big(a_\ell-\avgp_{\ell-1}(\lbidf(a_\ell))\Big)$. When bidding $\rbidf(a_\ell)$, he wins in the first $\ell$ exchanges, and gets profit $\left(\sum_{j=1}^{\ell}\flow_j\right)\Big(a_\ell-\avgp_{\ell}(\rbidf(a_\ell))\Big)$. Then the value of $a_\ell$ will be determined by the following equation:
\begin{equation}\label{eqn-al}
\left(\sum_{j=1}^{\ell-1}\flow_j\right)\Big(a_\ell-\avgp_{\ell-1}(\lbidf(a_\ell))\Big)=\left(\sum_{j=1}^{\ell}\flow_j\right)\Big(a_\ell-\avgp_{\ell}(\rbidf(a_\ell))\Big).
\end{equation}
Now we analyze first-order condition similar to what we did in previous section. Since $\bidf$ is a bidding equilibrium, when winner's value $v\in [a_\ell,a_{\ell+1})$, he bids $b=\bidf(v)$ to maximize his utility $G(\bidf^{-1}(b))\left(\sum_{j=1}^{\ell}\flow_j\right)(v-\avgp_\ell(b))$. Take the derivative to zero, we get
\[\frac{g(\bidf^{-1}(b))}{\bidf'(\bidf^{-1}(b))}(v-\avgp_\ell(b))-G(\bidf^{-1}(b))p'_\ell(b)=0.\]
Applying $b=\bidf(v)$ we get
\[vg(v)=\bidf'(v)G(v)\avgp_\ell(\bidf(v))+g(v)p'_\ell(\bidf(v))=\frac{d}{dv}G(v)\avgp_\ell(\bidf(v)).\]
Integrate both sides from $a_\ell$ to $v$ we get
\begin{equation}\label{eqn-integral-ell}
\int_{a_\ell}^{v}g(t)tdt=G(v)\avgp_\ell(\lbidf(v))-G(a_\ell)\avgp_\ell(\rbidf(a_\ell)).
\end{equation}
The following lemma ensures that there is a unique equilibrium bidding function $\bidf$ that satisfy (\ref{eqn-integral-ell}) for every $\ell$. The proof of the lemma will get deferred to appendix.

\begin{lemma}\label{lem-existence}
There exists a solution bidding function $\bidf$ to (\ref{eqn-integral-ell}) for all $\ell$, and such bidding function $\bidf$ is an equilibrium among bidders. Furthermore, $\bidf(v)\leq v$ for all $v$.
\end{lemma}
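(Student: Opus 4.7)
The plan is a piecewise construction of $\bidf$ followed by a global equilibrium check. First I would define $\bidf$ inductively on the segments $[a_\ell, a_{\ell+1})$. On $[0, a_1)$ set $\bidf \equiv 0$: no bid below $r_1$ wins in any exchange. For $\ell \geq 1$, assume $\bidf$ is already defined on $[0, a_\ell)$ and impose the right-limit $\rbidf(a_\ell) = r_\ell$: any bid in $(\lbidf(a_\ell), r_\ell)$ wins exactly the same set of exchanges as $\lbidf(a_\ell)$ but pays weakly more in any FP component, so $r_\ell$ is the unique sensible starting bid with which to participate in exchange $\ell$. Equation (\ref{eqn-integral-ell}) then becomes
\[
\avgp_\ell(\bidf(v)) \;=\; \frac{1}{G(v)}\left(\int_{a_\ell}^{v} t\, g(t)\, dt \;+\; G(a_\ell)\, r_\ell\right),
\]
which pins down $\bidf(v)$ on $[a_\ell, a_{\ell+1})$ provided $\avgp_\ell$ is strictly increasing on $[r_\ell, r_{\ell+1})$; this I would verify by noting that each FP component contributes $b$ to the weighted payment and each SP component contributes $\E[\max(r_j, \bidf(v_{(2)})) \mid v_{(2)} < \bidf^{-1}(b)]$, both strictly increasing in $b$.

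Next, the transition point $a_{\ell+1}$ is obtained as the unique root of the jump indifference (\ref{eqn-al}) at level $\ell+1$. Writing $h(a) = \bigl(\sum_{j\leq\ell}\flow_j\bigr)\bigl(a - \avgp_\ell(\bidf(a))\bigr) - \bigl(\sum_{j\leq\ell+1}\flow_j\bigr)\bigl(a - r_{\ell+1}\bigr)$, $h$ is continuous on $a \geq \max(a_\ell, r_{\ell+1})$, strictly positive at the left endpoint (the second bracket vanishes or is small while the first is strictly positive by bid shading), and $h \to -\infty$ as $a \to \infty$ because the coefficient $\sum_{j\leq\ell+1}\flow_j$ strictly exceeds $\sum_{j\leq\ell}\flow_j$. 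IVT together with local monotonicity of $h$ delivers a unique crossing; reserve ties $r_\ell = r_{\ell+1}$ simply merge consecutive segments without a jump.

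Third, to check that $\bidf$ is a symmetric BNE I would study the deviation utility $U(v') = \bigl(\sum_{j\leq\ell'}\flow_j\bigr) G(v')\bigl(v - \avgp_{\ell'}(\bidf(v'))\bigr)$ of a type-$v$ bidder who mimics type $v'$ in segment $\ell'$. Differentiating (\ref{eqn-integral-ell}) gives $\tfrac{d}{dv'}\bigl[G(v')\avgp_{\ell'}(\bidf(v'))\bigr] = g(v') v'$, so
\[
U'(v') \;=\; \Bigl(\sum_{j\leq\ell'}\flow_j\Bigr)\, g(v')\,(v - v'),
\]
positive for $v' < v$ and negative for $v' > v$. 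Hence $v' = v$ is optimal within any segment, and the jump indifference (\ref{eqn-al}) makes $U(v')$ continuous across segment boundaries, giving a global maximum at $v' = v$. Deviations to bids inside a gap $(\lbidf(a_\ell), r_\ell)$ are weakly dominated by $\lbidf(a_\ell)$ (same winning set, weakly higher FP payment), and bids below $r_1$ give zero utility, so no off-range deviation is strictly profitable.

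Finally, the shading bound $\bidf(v) \leq v$ would follow in two steps: the implicit formula together with $t \leq v$ and $r_\ell \leq a_\ell \leq v$ immediately gives $\avgp_\ell(\bidf(v)) \leq v$, which is $\bidf(v) \leq v$ in a pure-FP market since there $\avgp_\ell(b) = b$; for mixed markets one closes the gap by noting that $\bidf(v) > v$ would force the bidder's FP payment strictly above $v$ on a positive-probability winning event, so a deviation to $b = v$ strictly improves utility in all FP components while weakly improving SP components, contradicting the equilibrium property established in Step~3. I expect the main technical obstacle to be the interaction between Steps~1 and~2: threading the inversion of $\avgp_\ell$ through the previously-constructed $\bidf$ on earlier segments and ensuring the transition points $a_\ell$ land consistently when reserves are close and SP components make $\avgp_\ell$ nonlinear in $b$.
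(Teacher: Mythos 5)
Your overall architecture matches the paper's (piecewise construction on the segments $[a_\ell,a_{\ell+1})$ with $\rbidf(a_\ell)=r_\ell$, indifference conditions to locate the $a_\ell$, then a global incentive-compatibility check), and your within-segment computation $U'(v')=\bigl(\sum_{j\leq\ell'}\flow_j\bigr)g(v')(v-v')$ is a clean equivalent of the paper's argument. But there are two concrete gaps. First, in Step~1 you treat $\avgp_\ell$ as a fixed strictly increasing function of the bid $b$ that can be inverted pointwise to "pin down" $\bidf(v)$. It is not: the second-price components of $\avgp_\ell(\bidf(v))$ equal $\frac{1}{G(v)}\bigl(G(a_j)r_j+\int_{a_j}^{v}\bidf(t)g(t)\,dt\bigr)$, which depends on the values of $\bidf$ on the very segment you are constructing. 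Equation (\ref{eqn-integral-ell}) is therefore a Volterra-type integral equation, not an inversion problem; the paper resolves this by differentiating to obtain the linear ODE $\lambda_I G(v)\bidf'(v)=(\lambda_I+\lambda_J)g(v)\bigl(v-\bidf(v)\bigr)$ with initial condition $\bidf(a_\ell)=r_\ell$ (and the degenerate case $\lambda_I=0$ giving $\bidf(v)=v$). Your plan as written does not contain the step that makes the construction well-posed when second-price exchanges are present.

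Second, in Step~3 the claim that the jump indifference (\ref{eqn-al}) "makes $U(v')$ continuous across segment boundaries" is false. Indifference holds only for the boundary type: for a type-$v$ deviator the utility jumps by $\flow_j G(a_j)(v-a_j)$ at $v'=a_j$, which vanishes only when $v=a_j$. The global argument still goes through, but only because this jump has the favorable sign in both directions (downward deviations cross boundaries $a_j\leq v$ and lose $\flow_j G(a_j)(v-a_j)\geq 0$; upward deviations cross boundaries $a_j>v$ and the jump is $\leq 0$) --- this is exactly the paper's computation $u(a_j,v)-u(a_j^-,v)=-\flow_j(a_j-v)$, which it then telescopes. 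You need that sign analysis, not continuity; as stated your gluing step would "work" regardless of the direction of the jumps and so proves nothing. The final bound $\bidf(v)\leq v$ is loosely argued in your write-up, but the ODE above gives it immediately by comparison with the identity (the paper itself leaves this implicit), so I would not count that as a gap.
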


\textit{\textbf{Step 2: Find a lower bound for exchange $k$'s revenue when he proposes $\fp_{r_k}$.}}
Now we are ready to analyze the revenue obtained by exchange $k$, when all exchanges with smaller reserve prices use first-price auctions. We start with an upper bound of the revenue of exchange $k$ when he proposes $\fp_{r_k}$. Consider any winner's value $v\geq a_k$, and assume $v\in[a_{s},a_{s+1})$ for some $s$. Then
\begin{eqnarray*}
\int_{a_k}^{v}tg(t)dt
&=&\int_{a_s}^{v}tg(t)dt+\int_{a_{s-1}}^{a_s}tg(t)dt+\cdots+\int_{a_k}^{a_{k+1}}tg(t)dt\\
&=&\left(G(v)\avgp_s(\lbidf(v))-G(a_s)\avgp_s(\rbidf(a_s))\right)\\
& &+\left(G(a_{s})\avgp_{s-1}(\lbidf(a_s))-G(a_{s-1})\avgp_{s-1}(\rbidf(a_{s-1}))\right)\\
& &+\cdots+\left(G(a_{k+1})\avgp_k(\lbidf(a_{k+1}))-G(a_k)\avgp_k(\rbidf(a_k))\right)\\
&\leq&G(v)\avgp_{s}(\lbidf(v))-G(a_k)\avgp_k(\rbidf(a_k))\\
&\leq&G(v)\lbidf(v)-G(a_k)\avgp_k(\rbidf(a_k))\\
&=&G(v)\lbidf(v)-G(a_k)r_k.
\end{eqnarray*}
Here the second equality follows from equation (\ref{eqn-integral-ell}); the third inequality is true since $\avgp_{\ell-1}(\lbidf(a_\ell))<\avgp_{\ell}(\rbidf(a_\ell))$ by equation 
(\ref{eqn-al}), whenever $r_{\ell-1}<r_{\ell}$; the fourth inequality is by $\avgp_s(\lbidf(v))\leq \lbidf(v)$ as all auctions proposed by exchanges are first-price with reserve or second-price with reserve which cannot charge the winner more than his bid, thus the aggregated average auction also won't charge the winner more than his bid, and the equality holds only if every exchange with reserve at most $r_s$ uses first-price auction; the last equality follows from the fact that $\rbidf(a_k)=r_k$, and all exchanges with smaller reserves propose first-price auctions, thus $\avgp_k$ is identity function. Thus when exchange $k$ uses first-price auction with reserve $r_k$, when highest-valued bidder has value $v\geq a_k$, the revenue would be $\lbidf(v)$\footnote{Here we need to assume $\lbidf(v)=\bidf(v)$. This is not true for $v=a_\ell$ for some $\ell$, but we can ignore the revenue contribution from these points, as we assume the value distribution function has no point mass.}, which is lower bounded by 
\begin{equation}\label{eqn-fplb}
\lbidf(v)\geq\frac{1}{G(v)}\left(G(a_k)r_k+\int_{a_k}^{v}tg(t)dt\right). 
\end{equation}
For large enough $v$, the equality only holds if all exchanges use first-price auction.

\textit{\textbf{Step 3: Find an upper bound for exchange $k$'s revenue when he proposes $\secp_{r_k}$.}}
Now we upper bound the revenue of exchange $k$ when he uses second-price auction with reserve $r_k$, when the highest-valued bidder has value $v\geq a_k$. Let $v^{(1)}=v$ be the highest value among the bidders, $v^{(2)}$ be the second highest value. Notice that the revenue of exchange $k$ when $v^{(1)}=v$ and $v^{(2)}>a_k$ can be upper bounded by the second highest value since the bidders never overbid in the aggregated auction by Lemma \ref{lem-existence}. Thus
\begin{eqnarray}
\E_{\values}[\max(\bidf(v^{(2)}),r_k)|v^{(1)}=v]
&=&\Pr[v^{(2)}<a_k|v^{(1)}=v]r_k\nonumber\\
& &+\Pr[v^{(2)}\geq a_k|v^{(1)}=v]\cdot\E[\bidf(v^{(2)})|v^{(2)}\geq a_k,v^{(1)}=v]\nonumber\\
&=&\frac{G(a_k)}{G(v)}r_k+\frac{G(v)-G(a_k)}{G(v)}\cdot\frac{\int_{a_k}^{v}\bidf(t)d\frac{G(t)}{G(v)}}{\int_{a_k}^{v}d\frac{G(t)}{G(v)}}\nonumber\\
&=&\frac{1}{G(v)}\left(G(a_k)r_k+\int_{a_k}^{v}\bidf(t)g(t)dt\right)\nonumber\\
&\leq&\frac{1}{G(v)}\left(G(a_k)r_k+\int_{a_k}^{v}tg(t)dt\right)\label{eqn-second-bid}.
\end{eqnarray}
The last inequality follows from $\bidf(t)\leq t$. For large enough $v$, the equality holds only if every exchange proposes second-price auction, which means the equality of (\ref{eqn-second-bid}) cannot hold simultaneously with (\ref{eqn-fplb}), unless there is only one exchange.
Thus for every possible winner's value $v$, proposing $\secp_{r_k}$ always yields less revenue than proposing $\fp_{r_k}$. Taking the expectation over all possible values of $v^{(1)}$, we show that proposing $\fp_{r_k}$ is always a strictly better strategy than proposing $\secp_{r_k}$. Thus in an equilibrium of market, no exchange will propose second-price auction with reserve.
\end{proof}

\begin{proof}[Proof of Theorem \ref{thm-allzero}]
By definition of Nash Equilibrium, we need to prove that an exchange in market $(\fp_0,\fp_0,\cdots,\fp_0)$ will deviate to positive reserve price to gain revenue. Since the other exchanges have the same reserve prices, we can merge them and treat them as a single exchange. It suffices to prove when two exchanges propose $(\fp_0,\fp_0)$, there exists $\epsilon>0$ such that the first exchange can deviate to $\fp_{\epsilon}$ to get more revenue.

By equation (\ref{eqn-integral-ell}), when both exchanges use $\fp_{0}$, the bidding function is 
\begin{equation*}
\bidfn(v)=\frac{\int_{0}^{v}tg(t)dt}{G(v)},\ v\geq 0.
\end{equation*}
When exchange 1 switches to $\fp_{\epsilon}$, the bidding function $\bidf$ of the bidder will be as follows:
\begin{equation*}
\bidf(v)=\begin{cases}
    \frac{\int_{0}^{v}tg(t)dt}{G(v)}, &0\leq v<a;\\
    \frac{\epsilon G(a)+\int_{a}^{v}tg(t)dt}{G(v)}, &v\geq a.
\end{cases}
\end{equation*}
Here $a$ is the discontinuous point where bidder is indifferent in bidding $\lbidf(a)$ and $\rbidf(a)$, which is determined by $a-\epsilon=\flow_1(a-\lbidf(a))$. This implies $\int_{0}^atg(t)dt=\frac{\epsilon-(1-\flow_1)a}{\flow_1}G(a)$, and $\epsilon=(1-\flow_1)a+\flow_1\frac{\int_{0}^atg(t)dt}{G(a)}$.
When exchange 1 switches from $\fp_0$ to $\fp_{\epsilon}$, he will gain revenue when $v^{(1)}\geq a$, and lose revenue when $v^{(1)}<a$. Conditioned on $v=v^{(1)}\in[a,\infty)$, the revenue gain is 
\begin{eqnarray*}
\bidf(v)-\bidfn(v)
&=&\frac{\epsilon G(a)+\int_{a}^{v}tg(t)dt}{G(v)}-\frac{\int_{0}^{v}tg(t)dt}{G(v)}\\
&=&\frac{\epsilon G(a)-\int_{0}^{a}tg(t)dt}{G(v)}\\
&=&\frac{(1-\flow_1)}{G(v)}\left(aG(a)-\int_{0}^atg(t)dt\right).
\end{eqnarray*}
Here the last equality is by $\epsilon=(1-\flow_1)a+\flow_1\frac{\int_{0}^atg(t)dt}{G(a)}$.
Conditioned on $v=v^{(1)}\in[0,a)$, the revenue loss is 
\[\bidfn(v)-\bidf(v)=\bidfn(v)=\frac{\int_{0}^{v}tg(t)dt}{G(v)}.\]
Take $\epsilon$ small enough such that $F(a)\leq\frac{n-1}{n}$. Recall that $G(v)=F^{n-1}(v)$. Since winner's value $v^{(1)}$ has cumulative density function $F^n(v)$ and density $nF^{n-1}(v)f(v)=nG(v)f(v)$, the total revenue gain is at least
\begin{eqnarray*}
\int_{0}^{\infty}nf(t)G(t)(\bidf(t)-\bidfn(t))
&=&\int_{a}^{\infty}nf(t)G(t)\frac{(1-\flow_1)}{G(t)}\left(aG(a)-\int_{0}^asg(s)ds\right)dt\\
& &-\int_{0}^{a}nf(t)G(t)\frac{\int_{0}^{t}sg(s)ds}{G(t)}dt\\
&=&n(1-F(a))(1-\flow_1)\left(aG(a)-\int_{0}^atg(t)dt\right)\\
& &-\int_{0}^{a}nf(t)\int_{0}^{t}sg(s)dsdt\\
&\geq&(1-\flow_1)\left(aG(a)-\int_{0}^atg(t)dt\right)-\int_{0}^{a}nf(t)\int_{0}^{t}sg(s)dsdt.
\end{eqnarray*}
Here the last inequality is by $F(a)\leq \frac{n-1}{n}$. Define \[h(a)=(1-\flow_1)\left(aG(a)-\int_{0}^atg(t)dt\right)-\int_{0}^{a}nf(t)\int_{0}^{t}sg(s)dsdt,\] and easy to see that $h(0)=0$. To prove that exchange 1's revenue gain is positive when switching to $\fp_{\epsilon}$ for some small $\epsilon>0$, it suffices to show that $h'(a)>0$, for every small enough $0<a<\frac{1-\flow_1}{nc}$. Actually,
\begin{equation*}
h'(a)=(1-\flow_1)G(a)-nf(a)\int_{0}^{a}tg(t)dt.
\end{equation*}
Find constant $c$ such that $f(a)\leq c$ for small enough $a$. Then 
\begin{eqnarray*}
h'(a)&\geq&(1-\flow_1)\int_{0}^{a}g(t)dt-nc\int_{0}^{a}tg(t)dt\\
&=&\int_{0}^{a}\left(1-\flow_1-nct\right)g(t)dt>0.
\end{eqnarray*}
Thus there exists $a>0$ such that $h(a)>0$, which implies exchange 1 gains revenue by switching to $\fp_{\epsilon}$. Therefore $(\fp_0,\fp_0)$ cannot be an equilibrium.
\end{proof}

\begin{proof}[Proof of Theorem \ref{thm-asymmetric}]
We prove that an exchange in market $(\fp_r,\fp_r,\cdots,\fp_r)$ will deviate to another reserve price to gain revenue. Theorem \ref{thm-allzero} settles the case of $r=0$, and here we only need to analyze the case of $r>0$. Since the other exchanges have the same reserve prices, we merge them and treat them as a single exchange. It suffices to prove when two exchanges propose $(\fp_r,\fp_r)$, there exists $\epsilon>0$ such that the first exchange can deviate to $\fp_{r-\epsilon}$ to get more revenue.

By equation (\ref{eqn-integral-ell}), when both exchanges use $\fp_{r}$, the bidding function is 
\begin{equation*}
\bidfn(v)=\begin{cases}
    0, &0\leq v<r;\\
    \frac{rg(r)+\int_{r}^{v}tg(t)dt}{G(v)}, &v\geq r.
\end{cases}
\end{equation*}
When exchange 1 switches to $\fp_{r-\epsilon}$, the bidding function $\bidf$ of the bidder will be as follows:
\begin{equation*}
\bidf(v)=\begin{cases}
    0, &0\leq v<r-\epsilon;\\
    \frac{(r-\epsilon)G(r-\epsilon)+\int_{r-\epsilon}^{v}tg(t)dt}{G(v)}, &r-\epsilon\leq v<a;\\
    \frac{rG(a)+\int_{a}^{v}tg(t)dt}{G(v)},&v\geq a.
\end{cases}
\end{equation*}
Here $a$ is the discontinuous point where bidder is indifferent in bidding $\lbidf(a)$ and $\rbidf(a)$, which is determined by $a-r=\flow_1(a-\lbidf(a))$. An upper bound of $a$ is $r+\frac{\flow_1}{1-\flow_1}\epsilon$, since $\lbidf(a)\geq r-\epsilon$. The same as before we define $v^{(1)}$ to be the winner's value. When exchange 1 switches from $\fp_r$ to $\fp_{r-\epsilon}$, he will gain revenue when $r-\epsilon\leq v^{(1)}<r$, and lose revenue when $v^{(1)}\geq r$. Conditioned on $v=v^{(1)}\in[r-\epsilon,r)$, the revenue gain is 
\[\bidf(v)-\bidfn(v)=\bidf(v)\geq r-\epsilon.\]
Conditioned on $v=v^{(1)}\in[r,a)$, the revenue loss is 
\begin{eqnarray*}
\bidfn(v)-\bidf(v)
&=&\frac{rG(r)+\int_{r}^{v}tg(t)dt}{G(v)}-\frac{(r-\epsilon)G(r-\epsilon)+\int_{r-\epsilon}^{v}tg(t)dt}{G(v)}\\
&=&\frac{rG(r)-(r-\epsilon)G(r-\epsilon)-\int_{r-\epsilon}^{r}tg(t)dt}{G(v)}\\
&\leq&\frac{rG(r)-(r-\epsilon)G(r-\epsilon)-\int_{r-\epsilon}^{r}(r-\epsilon)g(t)dt}{G(v)}\\
&=&\frac{G(r)}{G(v)}\epsilon\leq\frac{1}{G(v)}\epsilon.
\end{eqnarray*}
Conditioned on $v=v^{(1)}\in[a,\infty)$, the revenue loss is 
\begin{eqnarray*}
\bidfn(v)-\bidf(v)
&=&\frac{rG(r)+\int_{r}^{v}tg(t)dt}{G(v)}-\frac{rG(a)+\int_{a}^{v}tg(t)dt}{G(v)}\\
&=&\frac{\int_{r}^{a}(t-r)g(t)dt}{G(v)}\\
&\leq&\frac{g(a)}{G(v)}(a-r)^2\\
&=&\frac{(n-1)f(a)F^{n-2}(a)}{G(v)}(a-r)^2<\frac{nf(a)}{G(v)}(a-r)^2.
\end{eqnarray*}
Since $f(v)>0$ for $v>0$, we can find $c,\delta>0$ such that $f(v)\in[c,c+\delta]$ for $v\in[r-\epsilon,r+\frac{\flow_1}{1-\flow_1}\epsilon]$, and $\delta\to0$ when $\epsilon \to 0$. Since winner's value $v^{(1)}$ has cumulative density function $F^n(v)$ and density $nF^{n-1}(v)f(v)=nf(v)G(v)$, the total revenue gain is at least
\begin{eqnarray*}
\int_{r-\epsilon}^{\infty}nf(t)G(t)(\bidf(t)-\bidfn(t))
&>&\int_{r-\epsilon}^{r}nf(t)G(t)(r-\epsilon)dt-\int_{r}^{a}nf(t)G(t)\cdot\frac{1}{G(t)}\epsilon dt\\
& &-\int_{a}^{\infty}nf(t)G(t)\cdot\frac{nf(a)}{G(t)}(a-r)^2dt\\
&\geq&\int_{r-\epsilon}^{r}ncG(r-\epsilon)(r-\epsilon)dt-\int_{r}^{a}n(c+\delta)\epsilon dt\\
& &-\int_{0}^{\infty}f(t)dt\cdot n^2(a-r)^2(c+\delta)\\
&=&ncG(r-\epsilon)\cdot(r-\epsilon)\epsilon-n(c+\delta)(a-r)\epsilon\\
& &-n^2(a-r)^2(c+\delta)\\
&\geq&ncF^{n-1}(r-\epsilon)\cdot(r-\epsilon)\epsilon-n(c+\delta)\frac{\flow_1}{1-\flow_1}\epsilon^2\\
& &-n^2(c+\delta)\left(\frac{\flow_1}{1-\flow_1}\right)^2\epsilon^2\\
&\geq& 0
\end{eqnarray*}
when $\epsilon\to0$. The first inequality is true by applying previous bounds of revenue gain. The second inequality is true by $f(t)\in[c,c+\delta)$ for $t\in[r-\epsilon,a]$, and $g(s)=(n-1)f(s)F^{n-2}(s)\leq (n-1)(c+\delta)F^{n-2}(a)$ for $s\in[r,a]$. The third equality is true by $\int_{0}^{\infty}f(t)dt=1$. The fourth inequality is true by $a\leq r+\frac{\flow_1}{1-\flow_1}\epsilon$. The last inequality is true since when $\epsilon\to 0$, the lowest-degree term has positive coefficient. Thus there exists $\epsilon>0$ such that the first exchange prefers to switch to $\fp_{r-\epsilon}$.

\end{proof}



\bibliographystyle{plainnat}
\bibliography{reference}

\appendix
\section{Appendix}
\begin{proof}[Proof of Lemma \ref{lem-existence}]
We first prove that there exists a unique equilibrium that satisfy equation (\ref{eqn-integral-ell}) for each $\ell$. Suppose that we have already found a bidding function $\beta$ that satisfy (\ref{eqn-integral-ell}) for $<\ell$. Now we want to solve $\bidf$ for $v\in[a_\ell,a_{\ell+1})$. Let $I\subseteq[\ell]$ be the set of exchanges with reserve at most $r_\ell$ that uses first-price auctions, $J=[\ell]\setminus I$ be the set of exchanges with reserve at most $r_\ell$ that uses second-price auctions. Then the payment function $\avgp_{\ell}$ for $v\in[a_\ell,a_{\ell+1})$ would be the sum of first-price payments and second-price payments, which is
\begin{eqnarray*}
\avgp_{\ell}(\bidf(v))
&=&\frac{1}{\sum_{j\leq \ell}\flow_j}\left(\sum_{j\in I}\flow_j\bidf(v)+\sum_{j\in J}\flow_j\E_{\values}\left[\max(\bidf(v^{(2)}),r_j)\right]\right)\\
&=&\frac{1}{\sum_{j\leq \ell}\flow_j}\left(\sum_{j\in I}\flow_j\bidf(v)+\sum_{j\in J}\flow_j\frac{1}{G(v)}\left(G(a_j)r_j+\int_{a_j}^{v}\bidf(t)g(t)dt\right)\right)\\
&=&\frac{1}{\sum_{j\leq \ell}\flow_j}\Bigg(\bidf(v)\bigg(\sum_{j\in I}\flow_j\bigg)+\frac{1}{G(v)}\sum_{j\in J}\flow_jG(a_j)r_j\\
& &\ \ \ \ \ \ \ \ \ \ \ \ \ \ \ \ \ \ \ +\frac{1}{G(v)}\bigg(\sum_{j\in J}\flow_j\int_{a_j}^{v}\bidf(t)g(t)dt\bigg)\Bigg).
\end{eqnarray*}
Here the second equality follows from equation \ref{eqn-second-bid}. Apply to \ref{eqn-integral-ell} and multiply both sides by $\sum_{j\leq \ell}\flow_j$, and denote $\flow_I=\sum_{j\in I}\flow_j$, $\flow_J=\sum_{j\in J}\flow_j$, we get
\begin{eqnarray*}
(\flow_I+\flow_J)\int_{a_\ell}^{v}tg(t)dt
&=&\left(\flow_I\bidf(v)G(v)+\sum_{j\in J}\flow_jG(a_j)r_j+\sum_{j\in J}\flow_j\int_{a_j}^{v}\bidf(t)g(t)dt\right)\\
& &-\left(\flow_I\bidf(a_\ell)G(a_\ell)+\sum_{j\in J}\flow_jG(a_j)r_j+\sum_{j\in J}\flow_j\int_{a_j}^{a_{\ell}}\bidf(t)g(t)dt\right)\\
&=&\flow_I\bidf(v)G(v)-\flow_I\bidf(a_\ell)G(a_\ell)+\lambda_J\int_{a_\ell}^{v}\bidf(t)g(t)dt.
\end{eqnarray*}
Take the derivative on both sides, we get
\[(\lambda_I+\lambda_J)vg(v)=\lambda_IG(v)\bidf'(v)+(\lambda_I+\lambda_J)\bidf(v)g(v).\]
If $\lambda_I=0$, the above equation admits a solution $\bidf(v)=v$. This solution is consistent with $\bidf(a_\ell)=r_{\ell}$, as if all exchanges with reserve at most $r_\ell$ use second-price auction, the bidding equilibrium would be identity function. If $\lambda_I>0$, such differential equation always have a solution that satisfies $\bidf(a_\ell)=r_{\ell}$. Thus in every case there will be a bidding function $\bidf$ that satisfy equation (\ref{eqn-integral-ell}) for all $\ell$.

Notice that above analysis only shows that bidding according to $\bidf$ is local optimal. Now we prove that bidding according to $\bidf$ is also global optimal, i.e. a bidder with value $v$ will prefer to bid $\bidf(v)$ rather than $\bidf(v')$, for some $v'\neq v$. Define $h_\ell(v)=\avgp_\ell(\bidf(v))$ be the expected payment of winner when $v$ is winner's value, $\Lambda_\ell=\sum_{j\leq \ell}\flow_j$ be the market occupation of first $\ell$ exchanges. For $s\in[a_{\ell},a_{\ell+1})$, let $u(s,v)=\Lambda(\ell)G(s)(v-\avgp_\ell(s))$ be the expected utility obtained by a bidder with value $v$ when bidding $\bidf(s)$. It remains to show that for any $v$, $u(s,v)$ is maximized when $s=v$. For any $v''<v<v'$, assume that $v''\in[a_{j},a_{j+1})$, $v'\in[a_{k},a_{k+1})$, $v\in[a_{\ell}$, $j\leq \ell\leq k$. We now prove that $u(v',v)\leq u(v,v)$, and $u(v'',v)\leq u(v,v)$. 

Now we prove $u(v',v)\leq u(v,v)$. Notice that for any $j$, bidding $\rbidf(a_j)$ and $\lbidf(a_j)$ leads to different allocation. For $\ell+1\leq j\leq k$, we will use $u(a_j,v)$ and $u(a_j^-,v)$ to denote such difference in utility. Actually we can observe that, 
\begin{eqnarray*}
u(a_j,v)-u(a_j^-,v)&=&u(a_j,a_j)-u(a_j^-,a_j)-\flow_j(a_j-v)\\
&=&-\flow_j(a_j-v)\leq 0.
\end{eqnarray*}
The second equality comes from the fact that a bidder with value $a_j$ is indifferent in bidding $\lbidf(a_j)$ and $\rbidf(a_j)$.
Then
\begin{eqnarray}
u(v,v)-u(v',v)
&=&u(v,v)-u(a_{\ell+1}^-,v)+u(a_{\ell+1}^-,v)-u(a_{\ell+2}^-,v)\nonumber\\
& &+\cdots+u(a_{k-1}^-,v)-u(a_{k}^-,v)+u(a_k^-,v)-u(v',v)\nonumber\\
&\geq&\left(u(v,v)-u(a_{\ell+1}^-,v)\right)+\left(u(a_{\ell+1},v)-u(a_{\ell+2}^-,v)\right)\nonumber\\
& &+\cdots+\left(u(a_{k-1},v)-u(a_{k}^-,v)\right)+\left(u(a_k,v)-u(v',v)\right)\label{eqn-difference}.
\end{eqnarray}
Now it suffices to show that each difference in the above sum is non-negative.
Define $w(t)=u(t,t)$. Observe that for $t\in[a_k,v']$,
\begin{eqnarray*}
0&=&\frac{\partial}{\partial s}\cdot\frac{u(s,t)}{\Lambda_{k}}\Big|_{s=t}\\
&=&g(t)(t-h(t))-G(t)h'(t)\\
&\geq&g(t)(t-h(t))-G(t)h'(t)+G(t)-G(v')\\
&=&w'(t)-G(v').
\end{eqnarray*}
Here the first line follows by $s=t$ is the local maximal of $u(s,t)$; the third inequality follows from $G(v')\geq G(t)$.
Integrate $w'(t)+G(v')$ from $a_k$ to $v'$, we get
\begin{eqnarray*}
0&\geq&\int_{a_k}^{v'}(w'(t)-G(v'))dt\\
&=&w(v')-w(a_k)-G(v')(v'-a_k)\\
&=&G(v')(v'-h(v'))-G(a_k)(a_k-h(a_k))-G(v')(v'-a_k)\\
&=&G(v')(a_k-h(v'))-G(a_k)(a_k-h(a_k))\\
&\geq&G(v')(v-h(v'))-G(a_k)(v-h(a_k))\\
&=&\frac{1}{\Lambda_k}\big(u(v',v)-u(a_k,v)\big).
\end{eqnarray*}
Here the fifth line follows from $a_k\geq v$. Thus $u(a_k,v)-u(v',v)\geq 0$, and the same way we can prove $u(v,v)-u(a_{\ell+1}^-,v)\geq 0$, $u(a_{\ell+1},v)-u(a_{\ell+2}^-,v)\geq 0$, $\cdots$, $u(a_{k-1},v)-u(a_{k}^-,v)\geq 0$. 

Apply above results to inequality (\ref{eqn-difference}) we prove $u(v,v)\geq u(v',v)$. The proof of $u(v'',v)\leq u(v,v)$ can be done in the same way and is completely symmetric.

\end{proof}

\end{document}